\newcommand\Se[1]{\mathcal{#1}}
\newcommand\Db[1]{\mathbb{#1}}
\newcommand\MB[1]{\left[#1\right]}
\newcommand\LB[1]{\{#1\}}
\newcommand{\RN}[1]{\textup{\uppercase\expandafter{\romannumeral#1}}}
\newtheorem{theo}{Theorem}
\newtheorem{exam}{Example}
\newtheorem{defi}{Definition}
\newtheorem{cons}{Construction}
\def\BState{\State\hskip-\ALG@thistlm}
\definecolor{apple green}{rgb}{0.17,0.75,0.13}
\definecolor{edit_ah}{rgb}{0.0,0.0,0.0}
\newcommand*{\algrule}[1][\algorithmicindent]{\makebox[#1][l]{\hspace*{.5em}\vrule height 0.9 \baselineskip depth 0.3\baselineskip}}%
\def\ALG@printindent{%
    \ifnum \theALG@nested>0% is there anything to print
        \ifx\ALG@text\ALG@x@notext% is this an end group without any text?
            % do nothing
            \addvspace{0pt}% FUDGE for cases where no text is shown, to make the rules line up
        \else
            \unskip
            % draw a rule for each indent level
            \ALG@printindent@tempcnta=1
            \loop
                \algrule[\csname ALG@ind@\the\ALG@printindent@tempcnta\endcsname]%
                \advance \ALG@printindent@tempcnta 1
            \ifnum \ALG@printindent@tempcnta<\numexpr\theALG@nested+1\relax% can't do <=, so add one to RHS and use < instead
            \repeat
        \fi
    \fi
    }%
\patchcmd{\ALG@doentity}{\noindent\hskip\ALG@tlm}{\ALG@printindent}{}{\errmessage{failed to patch}}
\begin{document}
\bstctlcite{IEEEexample:BSTcontrol}
\title{\vspace{-0.35em}Topology-Aware Cooperative Data Protection in Blockchain-Based Decentralized Storage Networks}
\author{\IEEEauthorblockN{Siyi Yang$^1$, Ahmed Hareedy$^2$, Robert Calderbank$^2$, and Lara Dolecek$^1$}
\IEEEauthorblockA{$^1$ Electrical and Computer Engineering Department, University of California, Los Angeles, Los Angeles, CA 90095 USA\\
$^2$ Electrical and Computer Engineering Department, Duke University, Durham, NC 27708 USA\\
siyiyang@ucla.edu, ahmed.hareedy@duke.edu, robert.calderbank@duke.edu, and dolecek@ee.ucla.edu
}}
\maketitle

\begin{abstract}
From currency to cloud storage systems, the continuous rise of the blockchain technology is moving various information systems towards decentralization. Blockchain-based decentralized storage networks (DSNs) offer significantly higher privacy and lower costs to customers compared with centralized cloud storage associated with specific vendors. Coding is required in order to retrieve data stored on failing components. While coding solutions for centralized storage have been intensely studied, topology-aware coding for heterogeneous DSNs have not yet been discussed. In this paper, we propose a joint coding scheme where each node receives extra protection through the cooperation with nodes in its neighborhood in a heterogeneous DSN with any given topology. As an extension of, which also subsumes, our prior work on coding for centralized cloud storage, our proposed construction preserves desirable properties such as scalability and flexibility in networks with varying topologies. % Moreover, the proposed scheme achieves scalability, heterogeneity, and flexibility, which are important properties in DSNs. 

\begin{IEEEkeywords}
\color{edit_ah}Joint hierarchical coding, cooperative data protection, blockchain technology, decentralized storage networks.
\end{IEEEkeywords}
\end{abstract}

\IEEEpeerreviewmaketitle

\section{Introduction}
\label{sectoin: introduction}
The blockchain technology, first introduced by Satoshi Nakamoto as a technology supporting the digital currency called bitcoin, has been intensively discussed and regarded as a substantial innovation in cryptosystems \cite{nakamoto2019bitcoin,crosby2016blockchain,bagaria2018deconstructing}. Blockchain enables recording transactions through a decentralized deployment, which effectively addresses the potential issues of compromised data privacy and key abuse, arising from the existence of a \textcolor{edit_ah}{central} node that monopolizes all the actions and resource allocations in traditional centralized systems. \textcolor{edit_ah}{Decentralization has the potential to universally revolutionize a variety of applications, one of which is cloud storage.}

In contrast to traditional centralized storage based on the client-server model, where big companies monopolize renting the storage space to users, blockchain-based decentralized storage networks (DSNs) enable non-enterprise users to not only get access to the network storage space but also to contribute to increasing it via renting their remaining storage space on personal devices. Blockchain-based DSNs allocate the storage space and distribute the encrypted user data with the validation and integrity certification of a third-party, and thus have potential to offer higher privacy, higher reliability, and lower cost than currently available solutions. It has been found that the benefits in data integrity of decentralization are typically at the cost of higher latency and difficult maintenance.

While blockchain technology empowers DSNs to ensure the network-layer security, appropriate \textcolor{edit_ah}{error-correction codes (ECC) are needed to further improve the physical-layer reliability.} In a coded DSN, data are encoded via an ECC and stored in nodes, where each node is allowed to communicate and exchange information with \textcolor{edit_ah}{its} neighboring nodes. If data in each node are encoded independently, the system will be vulnerable to information leakage, data loss, and data tampering by malicious users. Therefore, enabling joint encoding of the data stored in all nodes such that nodes in the neighborhood cooperatively protect and validate their stored data collectively in the DSN is an essential requirement.

%Coded DSNs have the potential to address the aforementioned concerns regarding pricing, privacy, reliability, and energy efficiency. 

In existing literature on coding for distributed storage \cite{dimakis2010distributed,kong2010decentralized,ye2018cooperative}, \textcolor{edit_ah}{there has been no explicit consideration of random distributions and clustering nature of network nodes}. Clustered distributed storage has received attention in recent years in the context of multi-rack storage, where either the sizes of clusters and the capacities of the communication links are considered to be homogeneous \cite{tebbi2019multi,hou2019rack,chen2019explicit,prakash2018storage}, or the network topology \textcolor{edit_ah}{has a simple structure}\cite{li2010tree,pernas2013non}. However, \textcolor{edit_ah}{DSNs typically have more sophisticated topologies characterized by heterogeneity among capacities of communication links and erasure statistics of nodes due to the random and} dynamic nature of practical networks \cite{pernas2013non,wang2014heterogeneity,ibrahim2016green,sipos2018network,sipos2016erasure}. \textcolor{edit_ah}{Instead of solutions for simplified models, schemes that fit into any topology (referred to as ``topology-awareness'' later on) are desired to exploit the existing resources, achieving lower latency and higher reliability.} 
%In existing literature on coding for DSNs, such as \cite{dimakis2010distributed,kong2010decentralized}, information stored in each node is typically regarded as a symbol and information in the entire network is viewed as a codeword comprising these symbols, i.e., the symbol stored in each node is a linear combination of the message symbols, and node failures are recovered by accessing the data in a set of non-failing nodes. This set-up leads to identical sizes of data stored in different nodes, and thus unnecessary data overhead, which is undesirable in blockchain-based DSNs since heterogeneity is a critical attribute. In order to achieve optimal resource and power allocation, in an environment where data are prone to partial failures that are depicted through more sophisticated non-uniform statistical distributions, heterogeneity is indispensable. 

\textcolor{edit_ah}{To further reduce} latency and decoding complexity, we propose to provide each node with multiple ECC capabilities enabled by cooperating with neighboring nodes in a series of nested sets with increasing sizes: we refer to this as \textbf{ECC hierarchy} later on in the paper. As the size of the set increases, the associated ECC capability increases. In our method, we consider the case where each node stores encoded messages with local data protection, where the data length and codeword length can be customized by the user at each node. %Locally recoverable codes proposed in \cite{blaum2018extended,cassuto2017multi,huang2018ladder} also provide potential solutions. However, scalability, heterogeneity, and flexibility \cite{Yang2019HC} are not explicitly discussed in these methods.

In this paper, we introduce a \textcolor{edit_ah}{topology-aware coding scheme that enables hierarchical cooperative data protection among nodes in a DSN, which is built upon our prior work in \cite{Yang2019HC} in the context of centralized cloud storage and preserves desirable properties like scalability and flexibility.} The rest of the paper is organized as follows. In \Cref{sec: model and prelim}, we introduce the DSN model and necessary preliminaries. In \Cref{sec: cooperative data protection}, we define ECC hierarchies as well as their depth, and present a coding scheme with depth $1$. In \Cref{sec: MLC}, we define the notion of compatible cooperation graphs, and propose an explicit construction of hierarchical codes for nodes with their cooperation graph being compatible. Finally, we summarize our results and discuss future directions in \Cref{section: conclusion}.

\section{System Model and Preliminaries}
\label{sec: model and prelim}

In this section, we discuss the model and mathematical representation of a DSN, as well as necessary preliminaries. Throughout the remainder of this paper, $\MB{N}$ refers to $\{1,2,\dots,N\}$. For vectors $\bold{u}$ and $\bold{v}$ of the same length $p$, $\bold{u}\succ\bold{v}$ and $\bold{u}\prec \bold{v}$ implies $u_i>v_i$ and $u_i<v_i$, for all $i\in\left[p\right]$, respectively; $\bold{u}\succeq \bold{v}$ means ``$\bold{u}\succ\bold{v}$ or $\bold{u}=\bold{v}$'', and $\bold{u}\preceq \bold{v}$ means ``$\bold{u}\prec\bold{v}$ or $\bold{u}=\bold{v}$''.

\begin{figure}
\centering
\includegraphics[width=0.35\textwidth]{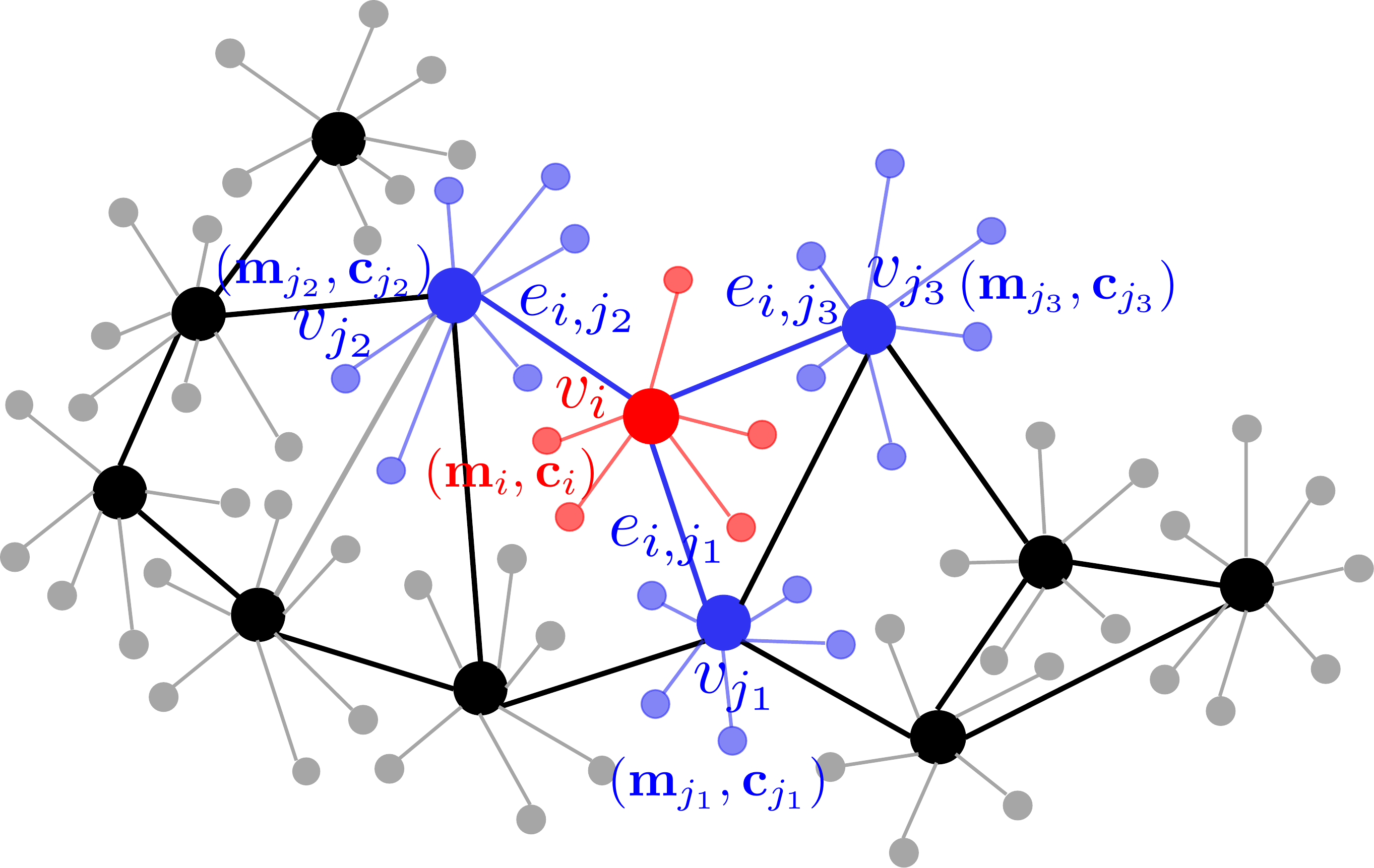}
\caption{Decentralized storage network (DSN). \textcolor{edit_ah}{For the cluster with the master node $v_i$}, message $\bold{m}_i$ is encoded to $\bold{c}_i$ \textcolor{edit_ah}{and symbols of $\bold{c}_i$ are stored distributively among non-master nodes that are locally connected to $v_i$. In the figures after Fig.~\ref{fig: figmodel}, we omit the local non-master nodes for clarity of figures.}}
\label{fig: figmodel}
\end{figure}

\subsection{Decentralized Network Storage}
\label{subsec: DNS model}
As shown in Fig.~\ref{fig: figmodel}, a DSN is modeled as a graph $G(V,E)$, where $V$ and $E$ denote the set of nodes (master only) and edges, respectively. \textcolor{edit_ah}{Codewords are stored among the nodes in a cluster. A failed node in a cluster is regarded as an erased symbol in the codeword stored in this cluster. A cluster is represented in $G$ by its master node $v_i\in V$ solely}. Each edge $e_{i,j}\in E$ represents a communication link connecting node $v_i$ and node $v_j$, through which $v_i$ and $v_j$ are allowed to exchange information. Denote the set of all neighbors of node $v_i$ by $\mathcal{N}_i$, e.g., $\mathcal{N}_i=\{j_1,j_2,j_3\}$ in Fig.~\ref{fig: figmodel}, and refer to it as the \textcolor{edit_ah}{\textbf{neighborhood}} of node $v_i$. Messages $\{\bold{m}_i\}_{v_i\in V}$ are jointly encoded as $\{\bold{c}_i\}_{v_i\in V}$, and $\bold{c}_i$ is stored in $v_i$. For a DSN denoted by $G(V,E)$, let $p=|V|$. Suppose $G$ is associated with a tuple $(\bold{n},\bold{k},\bold{r})\in\left(\mathbb{N}^p\right)^3$, where $\bold{k},\bold{r}\succ\bold{0}$ and $\bold{n}=\bold{k}+\bold{r}$. Note that $k_i$ represents the length of the message $\bold{m}_i$ associated with $v_i\in V$; $n_i$ and $r_i$ denote the length of $\bold{c}_i$ stored in $v_i$ and its syndrome, respectively.

\subsection{Preliminaries}
\label{subsec: prelim}
Based on the aforementioned notation, \textcolor{edit_ah}{a systematic generator matrix of a code} for $G(V,E)$ has the following structure:
\begin{equation}\label{eqn: GenMatDL}\bold{G}=\left[
\begin{array}{c|c|c|c|c|c|c}
\bold{I}_{k_1} & \bold{A}_{1,1} & \bold{0} & \bold{A}_{1,2} & \dots & \bold{0} & \bold{A}_{1,p}\\
\hline
\bold{0} & \bold{A}_{2,1} & \bold{I}_{k_2} & \bold{A}_{2,2} & \dots & \bold{0}& \bold{A}_{2,p}\\
\hline
\vdots & \vdots & \vdots & \vdots & \ddots & \vdots & \vdots \\
\hline
\bold{0} & \bold{A}_{p,1} & \bold{0} & \bold{A}_{p,2} & \dots & \bold{I}_{k_p} & \bold{A}_{p,p}\\
\end{array}\right],
\end{equation}
where all elements are from a Galois field $\textup{GF}(q)$, $q=2^\theta$ and $\theta \geq 2$. The major components of our construction are the so-called Cauchy matrices specified in \Cref{CauchyMatrix}.

\begin{defi} \emph{\textbf{(Cauchy matrix)}} \label{CauchyMatrix} Let $s,t\in\Db{N}$ and $\textup{GF}(q)$ be a finite field of size $q$. Suppose $a_1,\dots,a_s,b_1,\dots,b_t$ are $s \times t$ distinct elements in $\textup{GF}(q)$. The following matrix is known as a \textbf{Cauchy matrix},

\begin{equation*}\left[
\begin{array}{cccc}
\frac{1}{a_1-b_1} & \frac{1}{a_1-b_2} & \dots & \frac{1}{a_1-b_t}\\
\frac{1}{a_2-b_1} & \frac{1}{a_2-b_2} & \dots & \frac{1}{a_2-b_t}\\
\vdots & \vdots &\ddots & \vdots \\
\frac{1}{a_s-b_1} & \frac{1}{a_s-b_2} & \dots & \frac{1}{a_s-b_t}\\
\end{array}\right].
\label{defi: Cauchy matrix}
\end{equation*}
We denote this matrix by $\bold{Y}(a_1,\dots,a_s;b_1,\dots,b_t)$.
\end{defi}

\section{Cooperative Data Protection}
\label{sec: cooperative data protection}
In this section, we first mathematically describe the ECC hierarchy of a DSN and its depth, \textcolor{edit_ah}{which specifies the ECC capabilities of nodes while cooperating with different sets of other nodes.} We then propose a cooperation scheme where each node only cooperates with its single-hop neighbors.

\begin{figure}
\centering
\includegraphics[width=0.4\textwidth]{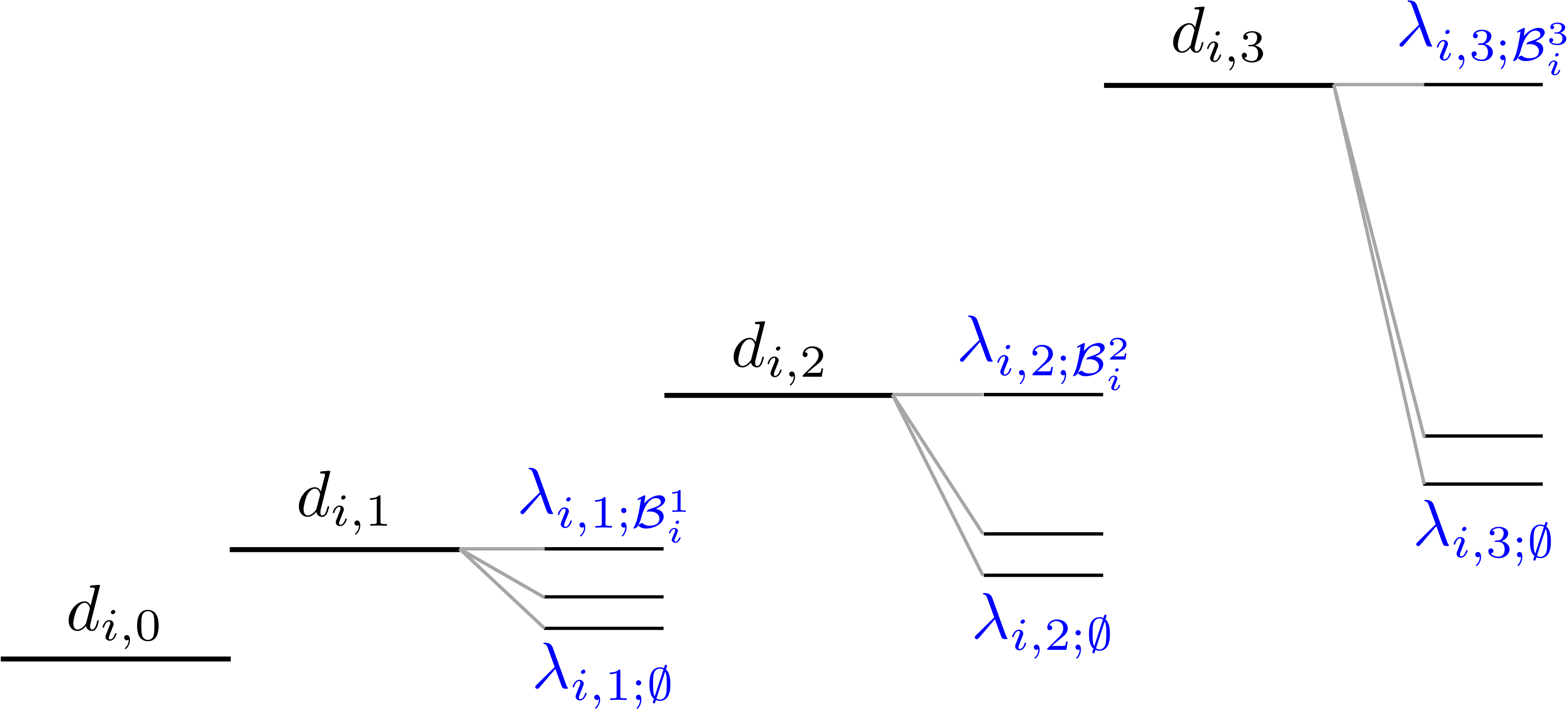}
\caption{ECC hierarchy of node $v_i\in V$.}
\label{fig: ECC Hierarchy}
\end{figure}

\begin{figure*}[!t]
\normalsize
\centering

\setcounter{equation}{2}
\setcounter{MaxMatrixCols}{12}
\begin{equation}\small
\scalebox{.7}{$
\begin{matrix}
\hline
\bold{A}_{1,1}&\bold{B}_{1,2}\bold{U}_2&\bold{0}&\bold{0}&\bold{0}&\bold{0}&\bold{0}&\bold{0}&\bold{0}&\bold{0}&\bold{0}&\bold{0}\\
\hline
\bold{B}_{2,1}\bold{U}_1&\bold{A}_{2,2}&\bold{B}_{2,3}\bold{U}_3&\bold{0}&\bold{B}_{2,5}\bold{U}_5&\bold{0}&\bold{0}&\bold{0}&\bold{0}&\bold{0}&\bold{0}&\bold{0}\\
\hline
\bold{0}&\bold{B}_{3,2}\bold{U}_2&\bold{A}_{3,3}&\bold{B}_{3,4}\bold{U}_4&\bold{0}&\bold{0}&\bold{0}&\bold{0}&\bold{0}&\bold{0}&\bold{0}&\bold{0}\\
\hline
\bold{0}&\bold{0}&\bold{B}_{4,3}\bold{U}_3&\bold{A}_{4,4}&\bold{B}_{4,5}\bold{U}_5&\bold{B}_{4,6}\bold{U}_6&\bold{0}&\bold{0}&\bold{0}&\bold{0}&\bold{0}&\bold{0}\\
\hline
\bold{0}&\bold{B}_{5,2}\bold{U}_2&\bold{0}&\bold{B}_{5,4}\bold{U}_4&\bold{A}_{5,5}&\bold{B}_{5,6}\bold{U}_6&\bold{0}&\bold{B}_{5,8}\bold{U}_8&\bold{0}&\bold{0}&\bold{0}&\bold{0}\\
\hline
\bold{0}&\bold{0}&\bold{0}&\bold{B}_{6,4}\bold{U}_4&\bold{B}_{6,5}\bold{U}_5&\bold{A}_{6,6}&\bold{B}_{6,7}\bold{U}_7&\bold{0}&\bold{0}&\bold{0}&\bold{0}&\bold{0}\\
\hline
\bold{0}&\bold{0}&\bold{0}&\bold{0}&\bold{0}&\bold{B}_{7,6}\bold{U}_6&\bold{A}_{7,7}&\bold{B}_{7,8}\bold{U}_8&\bold{B}_{7,9}\bold{U}_9&\bold{0}&\bold{B}_{7,11}\bold{U}_{11}&\bold{0}\\
\hline
\bold{0}&\bold{0}&\bold{0}&\bold{0}&\bold{B}_{8,5}\bold{U}_5&\bold{0}&\bold{B}_{8,7}\bold{U}_7&\bold{A}_{8,8}&\bold{B}_{8,9}\bold{U}_9&\bold{0}&\bold{0}&\bold{0}\\
\hline
\bold{0}&\bold{0}&\bold{0}&\bold{0}&\bold{0}&\bold{0}&\bold{B}_{9,7}\bold{U}_7&\bold{B}_{9,8}\bold{U}_8&\bold{A}_{9,9}&\bold{B}_{9,10}\bold{U}_{10}&\bold{0}&\bold{0}\\
\hline
\bold{0}&\bold{0}&\bold{0}&\bold{0}&\bold{0}&\bold{0}&\bold{0}&\bold{0}&\bold{B}_{10,9}\bold{U}_9&\bold{A}_{10,10}&\bold{B}_{10,11}\bold{U}_{11}&\bold{B}_{10,12}\bold{U}_{12}\\
\hline
\bold{0}&\bold{0}&\bold{0}&\bold{0}&\bold{0}&\bold{0}&\bold{B}_{11,7}\bold{U}_7&\bold{0}&\bold{0}&\bold{B}_{11,10}\bold{U}_{10}&\bold{A}_{11,11}&\bold{B}_{11,12}\bold{U}_{12}\\
\hline
\bold{0}&\bold{0}&\bold{0}&\bold{0}&\bold{0}&\bold{0}&\bold{0}&\bold{0}&\bold{0}&\bold{B}_{12,10}\bold{U}_{10}&\bold{B}_{12,11}\bold{U}_{11}&\bold{A}_{12,12}\\
\hline
\end{matrix}$}
\label{fig: genexample1}
\end{equation}
\hrulefill
\setcounter{equation}{1}
%\includegraphics[width=0.8\textwidth]{Genexam1.pdf}
%\caption{Part of the Generator Matrix for SDN in \Cref{exam: exam1}.}
\end{figure*}

\subsection{ECC Hierarchy}
\label{subsec: ECC Hierarchy}

Denote the \textbf{ECC hierarchy} of node $v_i\in V$ by a sequence $\bold{d}_i=(d_{i,0},d_{i,1},\dots,d_{i,L_i})$, where $L_i$ is called the \textbf{depth} of $\bold{d}_i$, and $d_{i,l}$ represents the maximum number of erased symbols $v_i$ can recover in its codeword $\bold{c}_i$ from the $l$-th level cooperation, for all $l\in \MB{L_i}$. The $0$-th level cooperation refers to local erasure correction, i.e., the local node $v_i$ recovers its data without communicating with neighboring nodes.

For each $v_i\in V$ such that $L_i>0$, there exist two series of sets of nodes, denoted by $\varnothing\subset\Se{A}_i^1\subset\Se{A}_i^2\subset \dots \subset \Se{A}_i^{L_i}\subseteq V$ and $\{\Se{B}_i^l\}_{l=1}^{L_i}$, where $\Se{A}_i^l\cap \Se{B}_i^l=\varnothing$ for all $l\in \MB{L_i}$, and a sequence $\left(\lambda_{i,l;\Se{W}}\right)_{\varnothing\subseteq\Se{W}\subseteq\Se{B}_i^l}$. In the $l$-th level cooperation, node $v_i\in V$ tolerates $\lambda_{i,l;\Se{W}}$ \textcolor{edit_ah}{($\varnothing\subseteq\Se{W}\subseteq\Se{B}_i^l$)} erasures if all nodes in $\Se{A}_i^l\cup\Se{W}$ are able to decode their own messages, where the maximum value is $\lambda_{i,l;\Se{B}_i^l} = d_{i,l}$ and is reached when $\Se{W}=\Se{B}_i^l$; the minimum value is $\lambda_{i,l;\varnothing}$ and is reached when $\Se{W}=\varnothing$. See Fig.~\ref{fig: ECC Hierarchy}. We first take a look at the cooperation schemes with ECC hierarchy of depth $1$.

%To better interpret this, recall the generator matrix in (\ref{eqn: GenMatDL}), then $\Se{A}_i^{L_i}$ is actually the set of all nodes $v_j\in V$ such that $\bold{A}_{j,i}\neq \bold{0}$, when $\Se{B}_i^{L_i}$ is the set of all nodes $v_j\in V$ such that $\bold{A}_{i,j}\neq \bold{0}$. Therefore, in the $l$-th cooperation, $\Se{A}_i^l$ represents the minimum set of nodes that successful decoding of their messages is sufficient to remove the cross parities $\sum_{j\in \MB{p}\setminus \{i\}}\bold{m}_j\bold{A}_{j,i}$ from the parity in $\bold{c}_i$ such that $v_i$ is able to know its own parity $\bold{m}_i\bold{A}_{i,i}$. However, knowing $\bold{m}_i\bold{A}_{i,i}$ only results in ECC capability that tolerates $r_i$ erasures, and $v_i$ waits until receiving enough additional cross parities provided by nodes in $\Se{B}_i^l$ to achieve the targeted ECC capability. The ECC hierarchy of node $v_i\in V$ is then further associated with a sequence $\boldsymbol{\lambda}_{i,l}=(\lambda_{i,l;\Se{W}})_{\varnothing\subseteq\Se{W}\subseteq \Se{B}_i^l}$, where $\lambda_{i,l;\Se{W}}$ denotes the ECC capability of $v_i$ in the $l$-th cooperation when $v_i$ successfully receives cross parities from nodes in $\Se{W}$. Then, the ECC hierarchy is depicted as is in Fig.~\ref{fig: ECC Hierarchy}.

\subsection{Single-Level Cooperation}
\label{subsec: SLC}

We next discuss the case where each node only has cooperation of depth $1$. Consider a DSN represented by $G(V,E)$ that is associated with parameters $(\bold{n},\bold{k},\bold{r})$ and a class of sets $\{\Se{M}_i\}_{v_i\in V}$ such that $\varnothing\subset\Se{M}_i\subseteq \Se{N}_i$, for all $v_i\in V$. %Suppose $\boldsymbol{\delta}=(\delta_1,\delta_2,\dots,\delta_p)\succeq \bold{0}$ is a parameter associated with the targeted local erasure correction capabilities of nodes in $V$, where the subscript $i$ refers to node $v_i$ and $\bold{r}\succ \boldsymbol{\delta}$. 
In \Cref{cons: 1}, we present a joint coding scheme where node $v_i$ only cooperates with nodes in $\Se{M}_i$, for all $v_i\in V$. Heterogeneity is obviously achieved since $n_i$, $k_i$, $r_i$, are not required to be identical for all $v_i\in V$. %$\bold{d}_i=(r_i-\delta_i,r_i+\sum\nolimits_{j\in \Se{M}_i} \delta_j)$ for each $v_i\in V$.
%\begin{figure}[H]
%\centering
%\includegraphics[width=0.35\textwidth]{7.pdf}
%\caption{Single-hop protection. Cooperation only exists between neighbors.}
%\label{fig: figSHP}
%\end{figure}

\begin{cons} \label{cons: 1} Let $G(V,E)$ represent a DSN associated with parameters $(\bold{n},\bold{k},\bold{r})$ and a local ECC parameter $\boldsymbol{\delta}$, where $\bold{r}\succ \boldsymbol{\delta}\succeq \bold{0}$. Let $p=|V|$ and $\textup{GF}(q)$ be a Galois field of size $q$, where $q>\max\limits_{v_i\in V} \left(n_i+\delta_i+\sum\nolimits_{j\in \Se{M}_i} \delta_j\right)$.   

For $i\in \MB{p}$, let $a_{i,x}$, $b_{i,y}$, $x\in\MB{k_i+\delta_i}$, and $y\in\MB{r_i+\sum\nolimits_{j\in \Se{M}_i} \delta_j}$ be distinct elements of $\textup{GF}(q)$.
Consider Cauchy matrix $\bold{T}_i\in \textup{GF}(q)^{(k_i+\delta_i)\times (r_i+\sum\nolimits_{j\in \Se{M}_i} \delta_j)}$ such that $\bold{T}_i=\bold{Y}(a_{i,1}, \dots, a_{i,k_i+\delta_i};b_{i,1},\dots,b_{i,r_i+\sum\nolimits_{j\in \Se{M}_i} \delta_j})$. Matrix $\bold{G}$ in (\ref{eqn: GenMatDL}) is assembled as follows. For $i\in\MB{p}$, obtain $\{\bold{B}_{i,j}\}_{j\in \Se{M}_i}$, $\bold{U}_i$, $\bold{A}_{i,i}$, according to the following partition of $\bold{T}_i$,

\vspace{-0.5em}\begin{equation}\label{eqn: CRS}
\bold{T}_i=\left[
\begin{array}{c|c}
\bold{A}_{i,i} & \begin{array}{c|c|c}
\bold{B}_{i,j_1} & \dots & \bold{B}_{i,j_{| \Se{M}_i|}}
\end{array}
\\
\hline
\bold{U}_i & \bold{Z}_{i}
\end{array}\right],
\end{equation}
where $\Se{M}_i=\{j_1,j_2,\dots,j_{| \Se{M}_i|}\}$, $\bold{A}_{i,i}\in \textup{GF}(q)^{k_i\times r_i}$,$\bold{U}_i\in \textup{GF}(q)^{\delta_i\times r_i}$, $\bold{B}_{i,j}\in \textup{GF}(q)^{k_i\times \delta_j}$ for $v_i\in V$ and $v_j\in \Se{M}_i$. Let $\bold{A}_{i,j}=\bold{B}_{i,j}\bold{U}_j$ if $v_j\in \Se{M}_i$, otherwise let it be a zero matrix. %Specify $\bold{A}_{i,j}$ in (\ref{eqn: GenMatDL}) for $i \neq j$ as follows:
%\begin{equation}\label{eqn: cons1Axy}
%\bold{A}_{i,j}=\begin{cases}
%\bold{B}_{i,j}\bold{U}_j,&\text{for all }j\in \Se{M}_i,\\
%\bold{0},&\text{otherwise}.
%\end{cases}
%\end{equation}
Denote the code with generator matrix $\bold{G}$ by $\Se{C}_1$.
\end{cons}

\begin{theo} \label{theo: ECCcons1} In a DSN with $\Se{C}_1$, $\bold{d}_i=(r_i-\delta_i,r_i+\sum\nolimits_{j\in \Se{M}_i}\delta_j)$, $\Se{A}_i^1=\Se{M}_i$ and $\textcolor{edit_ah}{\Se{B}_i^1=\bigcup\nolimits_{v_j\in\Se{M}_i}\left(\Se{M}_j\setminus(\{v_i\}\cup \Se{M}_i)\right)}$, for all $v_i\in V$. The ECC hierarchy associated with $d_{i,1}$ is $(\lambda_{i,1;\Se{W}})_{\varnothing\subseteq\Se{W}\subseteq \Se{B}_i^1}$, where $\lambda_{i,1;\Se{W}}=r_i+\sum\nolimits_{v_j\in \Se{M}_i,(\Se{M}_j\setminus\{v_i\})\subseteq (\Se{M}_i\cup\Se{W})}\delta_j$.
\end{theo}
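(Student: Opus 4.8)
The plan is to realize each node's data as a codeword of a single maximum-distance-separable (MDS) code built from its Cauchy matrix $\bold{T}_i$, and then to track, level by level, which coordinates of that codeword node $v_i$ can physically access. Writing $\sigma_i=\sum_{j\in\Se{M}_i}\delta_j$ and feeding the ``message'' $(\bold{m}_i,\bold{s}_i)$ of length $k_i+\delta_i$ into the systematic generator $[\bold{I}_{k_i+\delta_i}\mid \bold{T}_i]$, the partition \eqref{eqn: CRS} shows that the resulting codeword is exactly $(\bold{m}_i,\bold{s}_i,\bold{p}_i,\bold{q}_i)$, where $\bold{p}_i=\bold{m}_i\bold{A}_{i,i}+\bold{s}_i\bold{U}_i$ is the parity stored in $v_i$, $\bold{q}_i=\bold{m}_i\bold{B}_i+\bold{s}_i\bold{Z}_i$ is a virtual parity of length $\sigma_i$ split into blocks $\bold{q}_i^{(j)}$ ($j\in\Se{M}_i$) of length $\delta_j$, and $\bold{s}_i=\sum_{j:\,v_i\in\Se{M}_j}\bold{m}_j\bold{B}_{j,i}$ is the partial syndrome contributed by the neighbors cooperating with $v_i$. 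Since a Cauchy matrix (\Cref{CauchyMatrix}) is superregular, i.e.\ all its square submatrices are nonsingular, this code is MDS of length $k_i+\delta_i+r_i+\sigma_i$ and corrects any pattern of at most $r_i+\sigma_i$ erasures and no pattern of more. The whole argument then reduces to counting how many of the coordinates $(\bold{s}_i,\bold{q}_i)$ are recoverable at each level, treating every unrecoverable coordinate as a permanent erasure.

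For the depth-$0$ (local) level, $v_i$ sees only the stored symbols of $(\bold{m}_i,\bold{p}_i)$, while all $\delta_i+\sigma_i$ coordinates of $(\bold{s}_i,\bold{q}_i)$ are unavailable, as they depend on other nodes' messages. Subtracting these from the MDS budget leaves $(r_i+\sigma_i)-(\delta_i+\sigma_i)=r_i-\delta_i$ correctable erasures inside $\bold{c}_i$, giving $d_{i,0}=r_i-\delta_i$; the converse is the MDS converse, since fewer than $k_i+\delta_i$ surviving coordinates cannot determine the codeword. For the depth-$1$ level I would first show that once every node in $\Se{M}_i$ has decoded its own message, $v_i$ assembles $\bold{s}_i=\sum_{j\in\Se{M}_i}\bold{m}_j\bold{B}_{j,i}$ from the public matrices $\bold{B}_{j,i}$ and the decoded $\bold{m}_j$; this is why the mandatory set is $\Se{A}_i^1=\Se{M}_i$ (using that cooperation along an edge is mutual, as in the running example, so $\{j:\,v_i\in\Se{M}_j\}=\Se{M}_i$), and it un-erases all $\delta_i$ coordinates of $\bold{s}_i$.

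The crux is to determine, for each $j\in\Se{M}_i$, when the block $\bold{q}_i^{(j)}$ can also be recovered; the mechanism is reciprocal. A cooperating node $v_j$ that has decoded $(\bold{m}_j,\bold{p}_j)$ recovers $\bold{s}_j$ from $\bold{s}_j\bold{U}_j=\bold{p}_j-\bold{m}_j\bold{A}_{j,j}$ (here $\bold{U}_j$ has full row rank as a Cauchy submatrix), and $\bold{s}_j=\sum_{l\in\Se{M}_j}\bold{m}_l\bold{B}_{l,j}$ contains $\bold{m}_i\bold{B}_{i,j}$ as its $l=i$ term. Hence $v_j$ can return $\bold{m}_i\bold{B}_{i,j}$ to $v_i$ iff it can strip off the remaining terms, i.e.\ iff every node in $\Se{M}_j\setminus\{v_i\}$ has decoded; since the available helpers are exactly $\Se{M}_i\cup\Se{W}$, this is the condition $(\Se{M}_j\setminus\{v_i\})\subseteq(\Se{M}_i\cup\Se{W})$. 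Knowing $\bold{m}_i\bold{B}_{i,j}$ and the already-recovered $\bold{s}_i$ yields $\bold{q}_i^{(j)}=\bold{m}_i\bold{B}_{i,j}+\bold{s}_i\bold{Z}_{i,j}$, so each qualifying $j$ un-erases $\delta_j$ further coordinates. Subtracting the still-missing blocks from the MDS budget shows that the number of erasures correctable inside $\bold{c}_i$ is $r_i+\sum_{v_j\in\Se{M}_i,\,(\Se{M}_j\setminus\{v_i\})\subseteq(\Se{M}_i\cup\Se{W})}\delta_j=\lambda_{i,1;\Se{W}}$; setting $\Se{W}=\varnothing$ and $\Se{W}=\Se{B}_i^1$ recovers $\lambda_{i,1;\varnothing}$ and $d_{i,1}=r_i+\sigma_i$, and identifies $\Se{B}_i^1=\bigcup_{j\in\Se{M}_i}(\Se{M}_j\setminus(\{v_i\}\cup\Se{M}_i))$ as the least $\Se{W}$ making every $j\in\Se{M}_i$ qualify.

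I expect the main obstacle to be making the reciprocal recovery step and its converse fully rigorous: one must verify that the coordinates declared ``recovered'' really are determined (the rank facts for $\bold{U}_j$ and the Cauchy minors) while simultaneously arguing that the coordinates declared ``missing'' are genuinely unrecoverable, so that the MDS converse applies verbatim and the $\lambda_{i,1;\Se{W}}$ values are tight rather than mere lower bounds. The bookkeeping that the virtual parity blocks $\bold{q}_i^{(j)}$ of $v_i$ are in exact correspondence with the partial-syndrome contributions $\bold{m}_i\bold{B}_{i,j}$ absorbed by the neighbors $v_j$ is the part that ties the per-node MDS codes together and must be stated carefully.
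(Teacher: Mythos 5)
Your proposal is correct, and its combinatorial core is the same as the paper's: you identify exactly the paper's two mechanisms, namely that cancelling the interference $\bold{s}_i\bold{U}_i$ from the stored parity requires all of $\Se{M}_i$ to have decoded (giving $\Se{A}_i^1=\Se{M}_i$), and that a helper $v_j\in\Se{M}_i$ can return the cross parities $\bold{m}_i\bold{B}_{i,j}$ precisely when every node of $\Se{M}_j\setminus\{v_i\}$ has decoded (giving the condition $(\Se{M}_j\setminus\{v_i\})\subseteq(\Se{M}_i\cup\Se{W})$, hence the formula for $\lambda_{i,1;\Se{W}}$ and the identification of $\Se{B}_i^1$). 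Where you depart from the paper is in how the erasure counts are justified: the paper imports $d_{i,0}=r_i-\delta_i$ from \cite{Yang2019HC} and treats the step from ``these cross parities are available'' to ``this many erasures are correctable'' as immediate, whereas you embed each node's data in the virtual MDS codeword $(\bold{m}_i,\bold{s}_i,\bold{p}_i,\bold{q}_i)$ generated by $[\bold{I}_{k_i+\delta_i}\,|\,\bold{T}_i]$, invoke superregularity of the Cauchy matrix once, and read off $d_{i,0}$, $\lambda_{i,1;\Se{W}}$, and $d_{i,1}$ from a single erasure-budget count; this buys a self-contained and uniformly rigorous treatment of the counting at the price of the bookkeeping you acknowledge. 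Two remarks. First, your reduction $\{j:v_i\in\Se{M}_j\}=\Se{M}_i$ is a genuine symmetry assumption on the cooperation sets; the paper needs it just as much (its proof sums $\bold{s}_j$ over $v_{j'}\in\Se{M}_j$, while \Cref{cons: 1} places $\bold{B}_{j',j}\bold{U}_j$ in $\bold{G}$ when $v_j\in\Se{M}_{j'}$), so making it explicit is an improvement, not an error. Second, neither you nor the paper proves tightness of the $\lambda$ values; your own caveat is apt, since the MDS converse rules out recovery of the whole virtual codeword rather than of $\bold{m}_i$ alone (a short minimum-weight argument closes this), but as the paper's proof likewise establishes only achievability, this is not a gap relative to the paper.
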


\begin{proof} The local ECC capability $d_{i,0}=r_i-\delta_i$ at node $v_i$ has been proved in \cite[Construction 1]{Yang2019HC}. Therefore, we only need to prove the statement associated with the $1$-st level cooperation, and we discuss it in \Cref{exam: exam1}.

We first notice that for any $v_i\in V$, $v_j\in \Se{M}_i$, if $\bold{m}_j$ is recoverable, then since $\bold{A}_{j,j}$ and $\bold{U}_j$ have linearly independent columns, the sum $\bold{s}_{j}$ of cross parities $\bold{m}_{j'}\bold{B}_{j',j}$ generated because of $\{\bold{B}_{j',j}\bold{U}_{j}\}_{v_{j'}\in\Se{M}_j}$ can be calculated. If all the messages $\{\bold{m}_{j'}\}_{v_{j'}\in\Se{M}_j\setminus\{v_i\}}$ are further recoverable, then the cross parities $\bold{m}_i\bold{B}_{i,j}$ generated because of matrix $\bold{B}_{i,j}\bold{U}_j$ can be computed from $\bold{m}_i\bold{B}_{i,j}=\bold{s}_{j}-\sum\nolimits_{v_{j'}\in\Se{M}_j\setminus\{v_i\}}\bold{m}_{j'}\bold{B}_{j',j} $.  

Previous discussion implies that for any $\varnothing\subseteq\Se{W}\subseteq \Se{B}_i^1$, if $(\Se{M}_j\setminus\{v_i\})\subseteq (\Se{M}_i\cup\Se{W})$, then the additional $\delta_j$ cross parities $\bold{m}_i\bold{B}_{i,j}$ for $\bold{m}_i$ can be obtained. Therefore, $\lambda_{i,1;\Se{W}}=r_i+\sum\nolimits_{v_j\in \Se{M}_i,(\Se{M}_j\setminus\{v_i\})\subseteq (\Se{M}_i\cup\Se{W})}\delta_j$ and $d_{i,1}=\lambda_{i,1;\Se{B}_i^1}=r_i+\sum\nolimits_{j\in \Se{M}_i}\delta_j$.
\end{proof}

\begin{figure}
\centering
\includegraphics[width=0.35\textwidth]{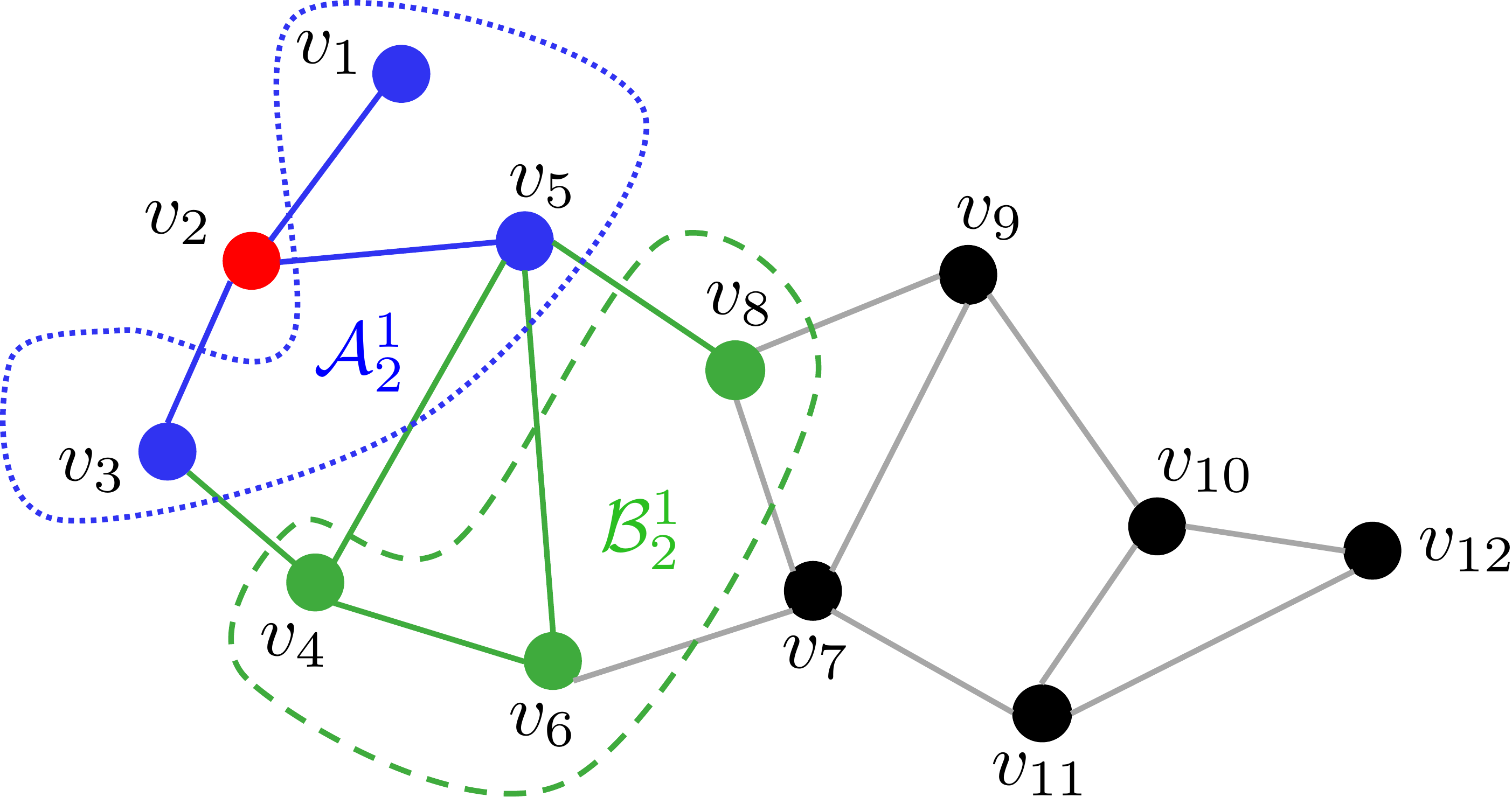}
\caption{DSN for \Cref{exam: exam1}.}
\label{fig: example1}
\end{figure}

\begin{exam} \label{exam: exam1} Consider the DSN shown in Fig.~\ref{fig: example1}, let $\Se{M}_i=\Se{N}_i$ in \Cref{cons: 1}, for all $i\in \MB{12}$. The matrix in (\ref{fig: genexample1}) is obtained by removing all the block columns of identity surrounded by zero matrices from the generator matrix (\ref{eqn: GenMatDL}) of $\Se{C}_1$, and is referred to as the \textbf{non-systematic component} of the generator matrix. 

Take node $v_2$ as an example. Observe that $\Se{A}_2^1=\Se{M}_2=\{v_1,v_3,v_5\}$, $\Se{A}_1^1=\Se{M}_1=\{v_2\}$, $\Se{A}_3^1=\Se{M}_3=\{v_2,v_4\}$, and $\Se{A}_5^1=\Se{M}_5=\{v_2,v_4,v_6,v_8\}$. Therefore, $\Se{B}_2^1=\bigcup\nolimits_{j\in\{1,3,5\}} \Se{M}_j\setminus\{v_1,v_2,v_3,v_5\}=\{v_4,v_6,v_8\}$. Moreover, $\textcolor{edit_ah}{\bold{d}_{2}}=(r_2-\delta_2,r_2+\sum\nolimits_{j\in\{1,3,5\}}\delta_j)$, $\lambda_{2,1;\varnothing}=\lambda_{2,1;\{v_6\}}=\lambda_{2,1;\{v_8\}}=\lambda_{2,1;\{v_6,v_8\}}=r_2+\delta_1$, $\lambda_{2,1;\{v_4\}}=\lambda_{2,1;\{v_4,v_6\}}=\lambda_{2,1;\{v_4,v_8\}}=r_2+\delta_1+\delta_3$, and $\lambda_{2,1;\{v_4,v_6,v_8\}}=r_2+\delta_1+\delta_3+\delta_5$.

%Actually, $\Se{B}_i$ is exactly the $2$-hop neighborhood of $v_i\in V$.
Consider the case where the $1$-st level cooperation of $v_2$ is initiated, i.e, the number of erasures lies within the interval $\left[r_2-\delta_2+1,r_2+\delta_1+\delta_3+\delta_5\right]$. Then, if $\bold{m}_1,\bold{m}_3,\bold{m}_5$ are all locally recoverable, the cross parities $\bold{m}_1\bold{B}_{1,2}$, $\bold{m}_3\bold{B}_{3,2}$, $\bold{m}_5\bold{B}_{5,2}$ computed from the non-diagonal parts in the generator matrix can be subtracted from the parity part of $\bold{c}_2$ to get $\bold{m}_2\bold{A}_{2,2}$. Moreover, the successful decoding of $\bold{m}_1$ makes $\bold{m}_{2}\bold{B}_{2,1}$ known to $v_2$. This process provides $(r_2+\delta_1)$ parities for $\bold{m}_2$, and thus allows $v_2$ to tolerate $(r_2+\delta_1)$ erasures.

In order to correct more than $(r_2+\delta_1)$ erasures, we need extra cross parities generated from $\bold{B}_{2,3}\bold{U}_3$ and $\bold{B}_{2,5}\bold{U}_5$. However, local decoding only allows $v_3$, $v_5$ to know $\bold{m}_{2}\bold{B}_{2,3}+\bold{m}_{4}\bold{B}_{4,3}$ and $\textcolor{edit_ah}{\bold{m}_{2}\bold{B}_{2,5}+\bold{m}_{4}\bold{B}_{4,5}+\bold{m}_{6}\bold{B}_{6,5}+\bold{m}_{8}\bold{B}_{8,5}}$, respectively. Therefore, $v_3$ needs $\bold{m}_4$ to be recoverable to obtain the extra $\delta_3$ cross parities, and $v_5$ needs \textcolor{edit_ah}{$\bold{m}_4$,} $\bold{m}_6$, $\bold{m}_8$ to be recoverable to obtain the extra $\delta_5$ cross parities. 

\end{exam}

\textcolor{edit_ah}{As shown in \Cref{exam: exam1}, instead of presenting a rigid~ECC capability, our proposed scheme enables nodes to have correction of a growing number of erasures with bigger sets of neighboring nodes recovering their messages. Therefore, nodes automatically choose the shortest path to recover their messages, significantly increasing the average recovery speed, especially when the erasures are distributed non-uniformly and sparsely, which is important for blockchain-based DSNs \cite{zhu2019blockchain,underwood2016blockchain}. Moreover, nodes with higher reliabilities are utilized to help decode the data of less reliable nodes, enabling correction of error patterns that are not recoverable in our previous work in \cite{Yang2019HC}. We show these properties in \Cref{exam: speed} and \Cref{exam: erasure pattern}.}

\begin{figure}
\centering
\includegraphics[width=0.3\textwidth]{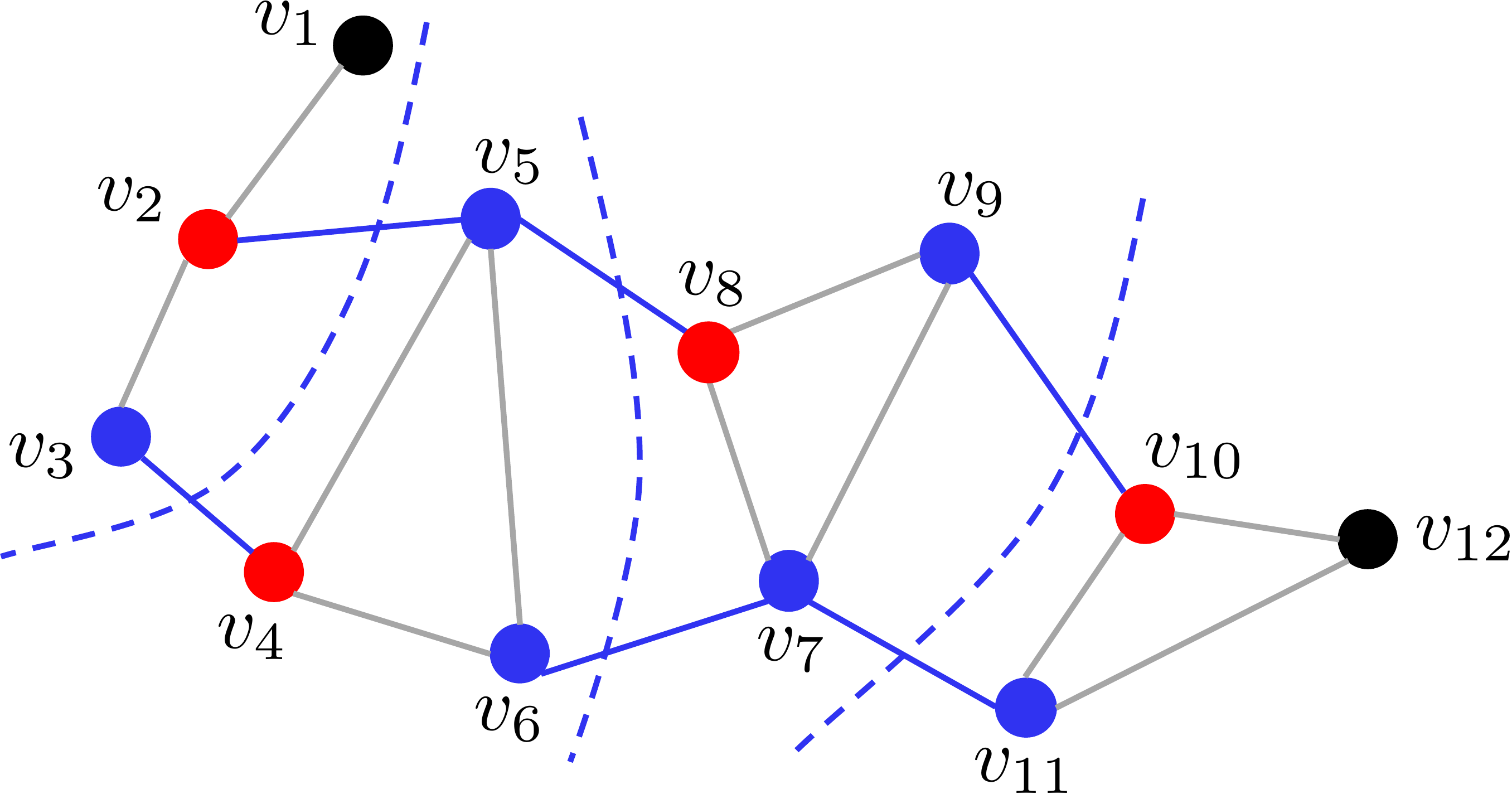}
\caption{The erasure pattern in \Cref{exam: erasure pattern}.}
\label{fig: compare}
\end{figure}

\begin{exam} \label{exam: speed} \emph{\textbf{(Faster Recovery Speed)}} Consider the DSN with the cooperation scheme specified in \Cref{exam: exam1}. Suppose the time to be consumed on transferring information through the communication link $e_{i,j}$ is $t_{i,j}\in\Db{R}^{+}$, where $\textcolor{edit_ah}{t_{i,j}=t_{j,i}}$ for all $i,j\in\MB{12}$, $i\neq j$, and $\max\{t_{1,2},t_{2,5}\}<(t_{2,3}+t_{3,4})<t_{2,5}+\min\{t_{4,5},t_{5,6},t_{5,8}\}$.

Consider the case where node $\bold{c}_2$ has $(r_2+1)$ erasures, which implies that apart from the case of $\bold{m}_1$, $\bold{m}_3$, $\bold{m}_5$ being obtained locally, recovering $\bold{m}_4$ is sufficient for $v_2$ to successfully obtain its message. The time consumed for decoding is $(t_{2,3}+t_{3,4})$. Therefore, any system using network coding with the property that a node failure is recovered through accessing more than $4$ other nodes will need longer processing time for this case. %\textcolor{blue}{However, such a network will have a very limited size. }
\end{exam}

\begin{exam} \label{exam: erasure pattern} \emph{\textbf{(Flexible Erasure Patterns)}} Consider the DSN with the cooperation scheme specified in \Cref{exam: exam1}. Suppose $\{\bold{m}_i\}_{i\notin\{2,4,8,10\}}$ are all locally recoverable. Then, consider the case where $\bold{m}_i$ has $(r_i+1)$ erasures for $i\in\{2,4,8,10\}$, which exemplifies a correctable erasure pattern for our proposed codes. 

The hierarchical coding scheme presented in \cite{Yang2019HC} can recover from this erasure pattern only if the code used adopts a partition of all nodes into $4$ disjoint groups, each of which contains exactly a node from $\{v_2,v_4,v_8,v_{10}\}$, as shown in Fig.~\ref{fig: compare}. Moreover, the partition of the code in \cite{Yang2019HC} results in a reduction of the ECC capability of the $1$-st level cooperation at every node except for $v_1,v_{12}$ because the additional information originally flow through edges marked in blue no longer exist.
\end{exam}

\section{Multi-Level Cooperation}
\label{sec: MLC}

In this section, we extend the construction presented in \Cref{subsec: SLC} to codes with ECC hierarchies of depth larger than $1$. \textcolor{edit_ah}{We first define the so-called \textbf{cooperation graphs} that describe how the nodes are coupled to cooperatively transmit information, and then prove the existence of hierarchical codes over a special class of cooperation graphs: the so-called \textbf{compatible graphs}.  }

\subsection{Cooperation Graphs}
\label{subsec: cooperation graphs}

Based on the aforementioned notation, for each $v_i\in V$ and $l\in\MB{L_i}$, let $\Se{I}^{l}_i=\Se{A}^{l}_i\setminus\Se{A}^{l-1}_i$ and refer to it as the $l$-th \textbf{helper} of $v_i$. %Let $\Se{O}^{l}_i=\Se{B}^{l}_i\setminus\Se{B}^{l-1}_i$ and refer to it as $l$-th \textbf{outer helper} of $v_i$. 
We next define the so-called \textbf{cooperation matrix}.% and \textbf{cooperation graph}. 

\begin{defi} \label{defi: cooperation matrix} For a joint coding scheme $\Se{C}$ for a DSN represented by $G(V,E)$ with $|V|=p$, the matrix $\bold{D}\in\mathbb{N}^{p\times p}$, in which $\bold{D}_{i,j}$ equals to $l$ for all $i,j\in\MB{p}$ such that $j\in\mathcal{I}^{l}_i$, and zero otherwise, is called the \textbf{cooperation matrix}.
\end{defi}

%\begin{exam} \label{exam: exam2} In \Cref{exam: exam1}, the cooperation matrix is exactly the adjacency matrix of the graph in Fig.~\ref{fig: example1}. 
%\end{exam}

As an example, the cooperation matrix in \Cref{exam: exam1} is exactly the adjacency matrix of the graph in Fig.~\ref{fig: example1}. Note that not every matrix is a cooperation matrix of a~set of joint coding schemes. In \Cref{subsec: construction}, we prove the existence of codes if the cooperation matrix represents a so-called \textbf{compatible graph}. Before going into details of the construction, we look at an example to obtain some intuition.
\begin{figure}
\centering
\includegraphics[width=0.45\textwidth]{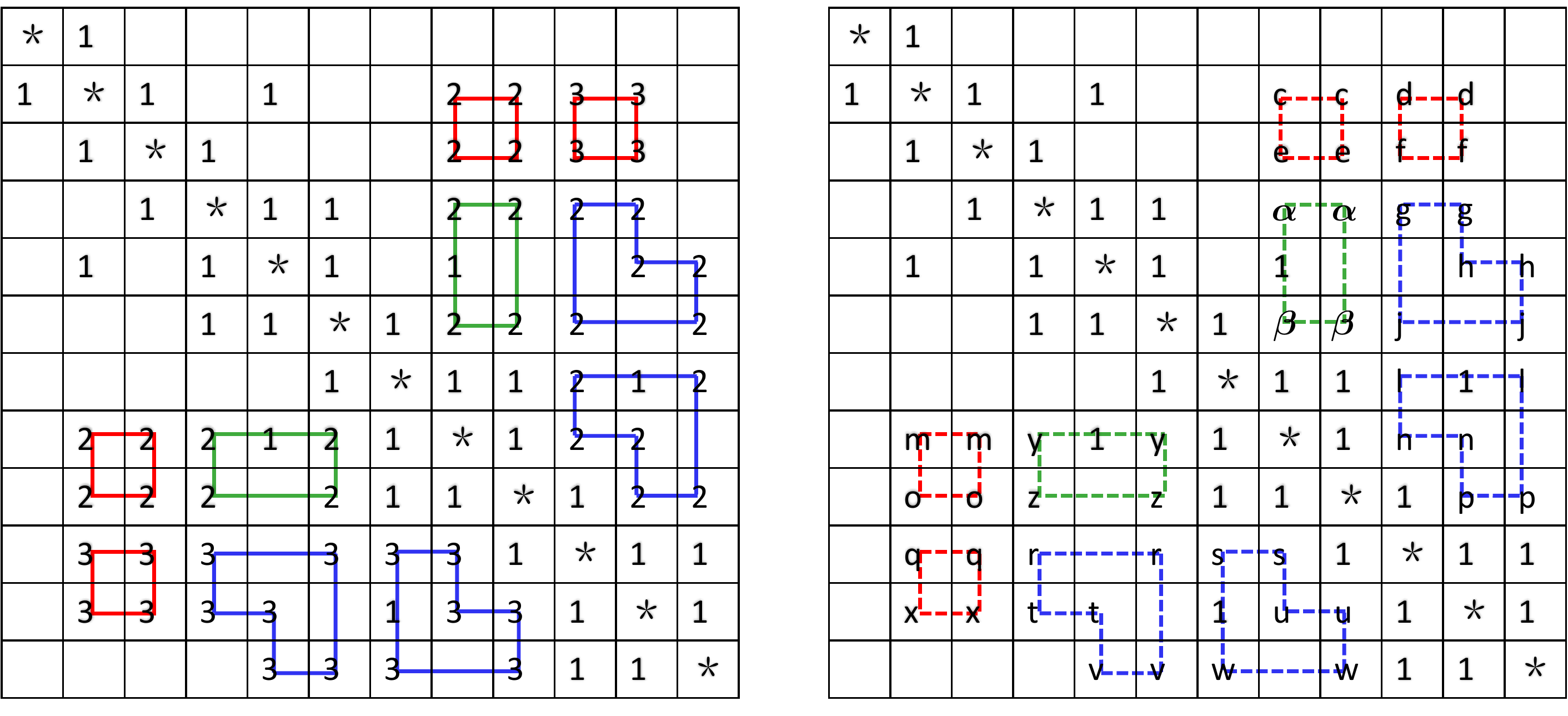}
\caption{Matrices $\bold{D}$ (left) and $\bold{X}$ (right) in \Cref{exam: exam3}.}
\label{fig: CoopMatrix}
\end{figure}

\begin{figure*}[!t]
\normalsize
\centering

\setcounter{equation}{3}
\setcounter{MaxMatrixCols}{12}
\begin{equation}\small
\scalebox{.7}{$
\begin{matrix}
\hline
\bold{A}_{1,1}&\bold{B}_{1,2}\bold{U}_2&\bold{0}&\bold{0}&\bold{0}&\bold{0}&\bold{0}&\bold{0}&\bold{0}&\bold{0}&\bold{0}&\bold{0}\\
\hline
\bold{B}_{2,1}\bold{U}_1&\bold{A}_{2,2}&\bold{B}_{2,3}\bold{U}_3&\bold{0}&\bold{B}_{2,5}\bold{U}_5&\bold{0}&\bold{0}&\textcolor{blue}{\bold{B}_c\bold{V}_{8;2}}&\textcolor{blue}{\bold{B}_c\bold{V}_{9;2}}&\textcolor{red}{\bold{B}_d\bold{V}_{10;3}}&\textcolor{red}{\bold{B}_d\bold{V}_{11;3}}&\bold{0}\\
\hline
\bold{0}&\bold{B}_{3,2}\bold{U}_2&\bold{A}_{3,3}&\bold{B}_{3,4}\bold{U}_4&\bold{0}&\bold{0}&\bold{0}&\textcolor{blue}{\bold{B}_e\bold{V}_{8;2}}&\textcolor{blue}{\bold{B}_e\bold{V}_{9;2}}&\textcolor{red}{\bold{B}_f\bold{V}_{10;3}}&\textcolor{red}{\bold{B}_f\bold{V}_{11;3}}&\bold{0}\\
\hline
\bold{0}&\bold{0}&\bold{B}_{4,3}\bold{U}_3&\bold{A}_{4,4}&\bold{B}_{4,5}\bold{U}_5&\bold{B}_{4,6}\bold{U}_6&\bold{0}&\textcolor{apple green}{\bold{B}_{\alpha}\bold{V}_{8;2}}&\textcolor{apple green}{\bold{B}_{\alpha}\bold{V}_{9;2}}&\textcolor{blue}{\bold{B}_g\bold{V}_{10;2}}&\textcolor{blue}{\bold{B}_g\bold{V}_{11;2}}&\bold{0}\\
\hline
\bold{0}&\bold{B}_{5,2}\bold{U}_2&\bold{0}&\bold{B}_{5,4}\bold{U}_4&\bold{A}_{5,5}&\bold{B}_{5,6}\bold{U}_6&\bold{0}&\bold{B}_{5,8}\bold{U}_8&\bold{0}&\textcolor{blue}{\bold{B}_h\bold{V}_{10;2}}&\bold{0}&\textcolor{blue}{\bold{B}_h\bold{V}_{12;2}}\\
\hline
\bold{0}&\bold{0}&\bold{0}&\bold{B}_{6,4}\bold{U}_4&\bold{B}_{6,5}\bold{U}_5&\bold{A}_{6,6}&\bold{B}_{6,7}\bold{U}_7&\textcolor{apple green}{\bold{B}_{\beta}\bold{V}_{8;2}}&\textcolor{apple green}{\bold{B}_{\beta}\bold{V}_{9;2}}&\textcolor{blue}{\bold{B}_j\bold{V}_{10;2}}&\bold{0}&\textcolor{blue}{\bold{B}_j\bold{V}_{12;2}}\\
\hline
\bold{0}&\bold{0}&\bold{0}&\bold{0}&\bold{0}&\bold{B}_{7,6}\bold{U}_6&\bold{A}_{7,7}&\bold{B}_{7,8}\bold{U}_8&\bold{B}_{7,9}\bold{U}_9&\textcolor{blue}{\bold{B}_l\bold{V}_{10;2}}&\bold{B}_{7,11}\bold{U}_{11}&\textcolor{blue}{\bold{B}_l\bold{V}_{12;2}}\\
\hline
\bold{0}&\textcolor{blue}{\bold{B}_m\bold{V}_{2;2}}&\textcolor{blue}{\bold{B}_m\bold{V}_{3;2}}&\textcolor{apple green}{\bold{B}_{y}\bold{V}_{4;2}}&\bold{B}_{8,5}\bold{U}_5&\textcolor{apple green}{\bold{B}_{y}\bold{V}_{6;2}}&\bold{B}_{8,7}\bold{U}_7&\bold{A}_{8,8}&\bold{B}_{8,9}\bold{U}_9&\textcolor{blue}{\bold{B}_n\bold{V}_{10;2}}&\textcolor{blue}{\bold{B}_n\bold{V}_{11;2}}&\bold{0}\\
\hline
\bold{0}&\textcolor{blue}{\bold{B}_o\bold{V}_{2;2}}&\textcolor{blue}{\bold{B}_o\bold{V}_{3;2}}&\textcolor{apple green}{\bold{B}_{z}\bold{V}_{4;2}}&\bold{0}&\textcolor{apple green}{\bold{B}_{z}\bold{V}_{6;2}}&\bold{B}_{9,7}\bold{U}_7&\bold{B}_{9,8}\bold{U}_8&\bold{A}_{9,9}&\bold{B}_{9,10}\bold{U}_{10}&\textcolor{blue}{\bold{B}_p\bold{V}_{11;2}}&\textcolor{blue}{\bold{B}_p\bold{V}_{12;2}}\\
\hline
\bold{0}&\textcolor{red}{\bold{B}_q\bold{V}_{2;3}}&\textcolor{red}{\bold{B}_q\bold{V}_{3;3}}&\textcolor{red}{\bold{B}_r\bold{V}_{4;2}}&\bold{0}&\textcolor{red}{\bold{B}_r\bold{V}_{6;3}}&\textcolor{red}{\bold{B}_s\bold{V}_{7;3}}&\textcolor{red}{\bold{B}_s\bold{V}_{8;3}}&\bold{B}_{10,9}\bold{U}_9&\bold{A}_{10,10}&\bold{B}_{10,11}\bold{U}_{11}&\bold{B}_{10,12}\bold{U}_{12}\\
\hline
\bold{0}&\textcolor{red}{\bold{B}_x\bold{V}_{2;3}}&\textcolor{red}{\bold{B}_x\bold{V}_{3;3}}&\textcolor{red}{\bold{B}_t\bold{V}_{4;2}}&\textcolor{red}{\bold{B}_t\bold{V}_{5;2}}&\bold{0}&\bold{B}_{11,7}\bold{U}_7&\textcolor{red}{\bold{B}_u\bold{V}_{8;3}}&\textcolor{red}{\bold{B}_u\bold{V}_{9;3}}&\bold{B}_{11,10}\bold{U}_{10}&\bold{A}_{11,11}&\bold{B}_{11,12}\bold{U}_{12}\\
\hline
\bold{0}&\bold{0}&\bold{0}&\bold{0}&\textcolor{red}{\bold{B}_v\bold{V}_{5;2}}&\textcolor{red}{\bold{B}_v\bold{V}_{6;3}}&\textcolor{red}{\bold{B}_w\bold{V}_{7;3}}&\bold{0}&\textcolor{red}{\bold{B}_w\bold{V}_{9;3}}&\bold{B}_{12,10}\bold{U}_{10}&\bold{B}_{12,11}\bold{U}_{11}&\bold{A}_{12,12}\\
\hline
\end{matrix}$}
\label{fig: example2}
\end{equation}
\hrulefill
\setcounter{equation}{4}
%\includegraphics[width=0.8\textwidth]{Genexam2.pdf}
%\caption{Part of the Generator Matrix for SDN in \Cref{exam: exam2}.}

\end{figure*}

\begin{exam}\label{exam: exam3} Recall the DSN in \Cref{exam: exam1}. We present a coding scheme with the cooperation matrix specified in the left part of Fig.~\ref{fig: CoopMatrix}. The non-systematic part of the generator matrix is shown in (\ref{fig: example2}), which is obtained through the following process:
\begin{enumerate}
\item Partition all the non-zero-non-one elements into structured groups, each of which is marked by either a rectangle or a hexagon in $\bold{D}$, as indicated in the left part of Fig.~\ref{fig: CoopMatrix};
\item Replace the endpoints of each horizontal line segment in Step 1 with $s\in S$ ($S$ is a set of symbols), as indicated in the right part of Fig.~\ref{fig: CoopMatrix}; denote the new matrix by $\bold{X}$;
\item Assign a parameter $\gamma_s\in\Db{N}$ to each $s\in S$, and a matrix $\bold{B}'_{s}\in \textup{GF}^{k_i\times\gamma_s}$ to any $(i,j)$ such that $\bold{X}_{i,j}=s$; 
\item For each $i\in\MB{p}$, $l\in\MB{L}$, let $\eta_{i;l}=\max\nolimits_{s: k\in \Se{I}_i^l,\bold{X}_{k,i}=s} \gamma_{s}$, assign $\bold{V}_{i;l}\in\textup{GF}(q)^{\eta_{i;l}\times r_i}$ to $v_i$; let $\bold{B}_{s}=\left[\bold{B}'_s,\bold{0}_{\eta_{i;l}-\gamma_s}\right]$; assign $\bold{A}_{i,j}=\bold{B}_s\bold{V}_{j;l}$ for $s=\bold{X}_{i,j}$, $l=\bold{D}_{i,j}$.
\item Assign $\bold{A}_{i,j}$ for $\bold{X}_{i,j}=1$ according to \Cref{cons: 1}.
\end{enumerate}

Let us again focus on node $v_2$. Let $\Se{I}_2^1=\{v_1,v_3,v_5\}$, $\Se{I}_2^2=\{v_8,v_9\}$, $\Se{I}_2^3=\{v_{10},v_{11}\}$. Then, $\Se{B}_2^1=\{v_4,v_6,v_8\}$, $\Se{B}_2^2=\{v_4,v_6\}$, $\Se{B}_2^3=\varnothing$, $d_{2,0}=r_2-\delta_2-\eta_{i;2}-\eta_{i;3}$, $d_{2,1}=d_{2,0}+\delta_2+\delta_1+\delta_3+\delta_5$, $d_{2,2}=d_{2,1}+\gamma_{c}$, $d_{2,3}=d_{2,2}+\gamma_{d}$. 

We first show that knowing $\{\bold{m}_j\}_{v_j\in\Se{A}_2^1}$ is sufficient for removing $\bold{s}_2=\sum\nolimits_{j\in\Se{I}_2^1}\bold{m}_j\bold{B}_{j,2}\bold{U}_{2}+\sum\nolimits_{l=2}^{3}\sum\nolimits_{j\in\Se{I}_2^l}\bold{m}_j\bold{B}_{\bold{X}_{j,2}}\bold{V}_{2;l}$ from the parity part of $\bold{c}_2$. Note that if $\bold{A}_{i,i}$, $\bold{U}_{i}$ and $\{\bold{V}_{i;l}\}_{l\in\{2,3\}}$ are linearly independent, then for all $l$, $\sum\nolimits_{j\in\Se{I}_i^l}\bold{m}_j\bold{B}_{\bold{X}_{j,i}}$ is recoverable if $\bold{m}_i$ is recoverable. In our example, this means that $\{\bold{m}_j\bold{U}_{j,2}\}_{j=1,3,5}$, $\bold{m}_{8}\bold{B}_{m}+\bold{m}_9\bold{B}_o$, $\bold{m}_{10}\bold{B}_{o}+\bold{m}_{11}\bold{B}_x$ are known: the sum of them is exactly $\bold{s}_2$. Therefore, $\bold{s}_2$ is removed through the $1$-st level cooperation. We next show that additional parities are obtained through $l$-th level cooperations with $l=2,3$.

In the $2$-nd level cooperation, $\bold{m}_8,\bold{m}_9$ are known. Therefore, $\bold{m}_2\bold{B}_c+\bold{m}_3\bold{B}_e+\bold{m}_4\bold{B}_{\alpha}+\bold{m}_6\bold{B}_{\beta}$ is also known. We remove $\bold{m}_3\bold{B}_{e}$ that is obtained via $v_3$. In order to obtain the $\gamma_c$ parities from $\bold{m}_2\bold{B}_c$, one needs $\bold{m}_4,\bold{m}_6$ to be recoverable. Therefore, $\Se{B}_2^2=\{v_4,v_6\}$, $d_{2,2}=d_{2,1}+\gamma_{c}$, $\lambda_{2,2;\varnothing}=d_{2,1}$.
\end{exam}

As shown in the first step in \Cref{exam: exam3}, the cooperation matrix adopts a partition of non-zero-non-one elements into groups where each of them forms a cycle. Suppose there are $T$ cycles. Represent each cycle with index $t\in \MB{T}$ by a tuple $(X_t,Y_t,\{X_{t;j}\}_{j\in Y_t},\{Y_{t;i}\}_{i\in X_t},l_t)$, where $X_t$, $Y_t$ denote the indices of the rows and the columns of the cycle, respectively; $l_t$ denotes the number assigned to the vertices of the cycle; $X_{t;j}=\{i_1,i_2\}$ for $j\in Y_t$ and $(i_1,j),(i_2,j)$ are the vertices of cycle $t$; $Y_{t;i}=\{j_1,j_2\}$ for $i\in X_t$ and $(i,j_1),(i,j_2)$ are the vertices of cycle $t$. For example, let $t=1$ for the blue cycle at the bottom left part of the matrices in Fig.~\ref{fig: CoopMatrix}. Then, it is represented by $(\{10,11,12\},\{4,5,6\},\{X_{1;j}\}_{j\in\{4,5,6\}},\{Y_{1;i}\}_{i\in \{10,11,12\}},3)$, where $X_{1;4}=\{10,11\}$, $X_{1;5}=\{11,12\}$, $X_{1;6}=\{10,12\}$, $Y_{1;10}=\{4,6\}$, $Y_{1;11}=\{4,5\}$, $Y_{1;12}=\{5,6\}$.

Observe that cycle $t\in\MB{T}$ in Fig.~\ref{fig: CoopMatrix} essentially represents a pair of disconnected edges or triangles with vertices from $X_t$ and $Y_t$. We mark $X_t$, $Y_t$, draw a directed edge with label $l$ from $X_t$ to $Y_t$ for each $t\in \MB{T}$, and obtain the so-called \textbf{cooperation graph}. The cooperation graph for the coding scheme in \Cref{exam: exam3} is shown in Fig.~\ref{fig: figMHP}. 

\begin{figure}
\centering
\includegraphics[width=0.3\textwidth]{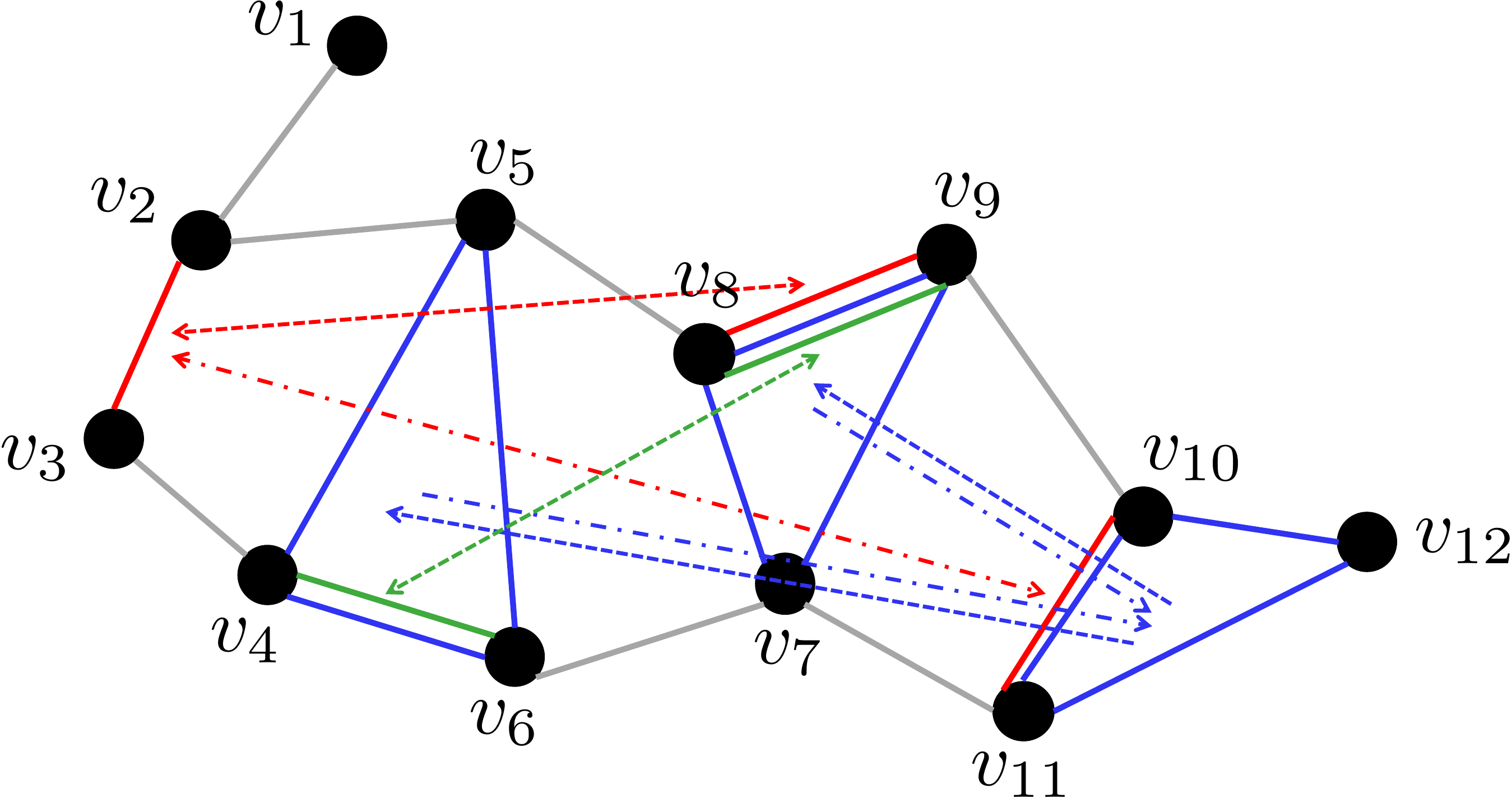}
\caption{Cooperation graph of \Cref{exam: exam3}.}
\label{fig: figMHP}
\end{figure}

\subsection{Construction Over Compatible Graphs}
\label{subsec: construction}
We have defined the notion of cooperation graphs in \Cref{subsec: cooperation graphs}. Observe that the cooperation graph shown in Fig.~\ref{fig: figMHP} satisfies a set of conditions that define the so-called \textbf{compatible graph}. We show in \Cref{theo: ECCcons2} the existence of a hierarchical coding scheme with cooperation graph $\Se{G}$ if $\Se{G}$ is a compatible graph. The coding scheme is presented in \Cref{cons: 2}.

\begin{defi}\label{defi: consistent graph} Let $\Se{G}$ be a cooperation graph on $G(V,E)$ that is represented by $\{(X_t,Y_t,\{X_{t;j}\}_{j\in Y_t},\{Y_{t;i}\}_{i\in X_t},l_t)\}_{t\in \MB{T}}$, and let the $1$-st level cooperation graph be $\Se{G}_1$. Let $\Se{M}_i$ denote the set of nodes that have an outgoing edge pointing at $v_i$ in $\Se{G}_1$. For each $v_i\in V$, $l\in \MB{L_i}$, let $R_{i;l}=\{t:i\in Y_t, t\in \MB{T},l_t=l\}$, $R_i=\bigcup\nolimits_{l\in\MB{L_i}}R_{i;l}$, $T_{i;l}=\{t:j\in R_{i;l}\cap Y_{t;i},t\in \MB{T} \}$, and $V_{i;l}=\bigcup\nolimits_{t\in R_{i;l}} Y_t$.  We call $\Se{G}$ a \textbf{compatible graph} on $G$ if the following conditions are satisfied: 
\begin{enumerate}
\item For each $v_i\in V$, sets in $\{Y_t\}_{t\in R_i}$ are disjoint;
\item For each $v_i\in V$, $l\in \MB{L_i}$, and any node $v_j$ such that $j\in V_{i;l}\setminus\{v_i\}$, $ V_{j;l}\subseteq \Se{M}_i$.
\end{enumerate}

\end{defi}

Note that in \Cref{defi: consistent graph}, $R_{i;l}$ consists of indices of all the cycles that define a level-$l$ cooperation at node $v_i$; $T_{i;l}$ consists of indices of all the cycles that provide extra parities in the $l$-th level cooperation of node $v_i$; $V_{i;l}$ consists of all the nodes that are required to be recovered in order to obtain the cross parities at $v_i$ resulting from the $l$-th level cooperation.

\begin{cons}\label{cons: 2} Let $G(V,E)$ represent a DSN with parameters $(\bold{n},\bold{k},\bold{r})$. Suppose $\Se{G}$ is a compatible graph of depth $L$ on $G$, with parameters $\{(X_t,Y_t,\{X_{t;j}\}_{j\in Y_t},\{Y_{t;i}\}_{i\in X_t},l_t)\}_{t\in \MB{T}}$, and the $1$-st level cooperation graph is denoted by $\Se{G}_1$ (other necessary parameters are as they are in \Cref{defi: consistent graph}). 

Let $\boldsymbol{\delta}$ be the $1$-st level cooperation parameter. For each $v_i\in V$, $1\leq l\leq L_i$, and any $t\in T_{i;l}$, assign cooperation parameter $\gamma_{i;t}$ to $C_t$; let $\eta_{j;l}=\max\nolimits_{t\in R_{j;l},i\in X_{t;j}} \gamma_{i;t}$.
 % and $\boldsymbol{\gamma}_2,\boldsymbol{\gamma}_3,\dots,\boldsymbol{\gamma}_{L}$ the parameters of $2$-nd to the $L$-th level cooperations in $\Se{G}$, respectively. 

%Specify $\boldsymbol{\eta}_2,\boldsymbol{\eta}_3,\dots,\boldsymbol{\eta}_L$ such that $\boldsymbol{\eta}_l=(\eta_{l,1},\eta_{l,2},\dots,\eta_{l,p})$, where $\eta_{l,i}=\max\nolimits_{k: v_i\in\Se{U}_k^l} \gamma_{l,k}$ for all $v_i\in V$ and $l\in\MB{L}$. Suppose $\bold{r}-\boldsymbol{\delta}\succ 2\sum\nolimits_{l=2}^{L}\boldsymbol{\eta}_l$, $\boldsymbol{\delta}\succeq\bold{0}$, and $\boldsymbol{\gamma}_l\succeq\bold{0}$ for all $2\leq l\leq L$. Let $p=|V|$ and $\textup{GF}(q)$ be a Galois field, where $q>\max\nolimits_{v_i\in V} (n_i+\delta_{i}+2\sum\nolimits_{l=2}^{L}\eta_{l,i}+\sum\nolimits_{j\in \Se{M}_i^1}\delta_{j}+\sum\nolimits_{2\leq l\leq L,k:\Se{V}_k^l\subseteq\Se{M}_i^l} |\Se{V}_k^l|\gamma_{l,k} )$.  
Let $u_{i}=k_{i}+\delta_{i}+\sum\nolimits_{l=2}^{L_i}\eta_{i;l}$, $v_{i}=r_{i}+\sum\nolimits_{j\in \Se{M}_i}\delta_{j}+\sum\nolimits_{2\leq l\leq L_i,t\in T_{i;l}} \gamma_{i;t}$, for $i\in\MB{p}$. For each $i\in\MB{p}$, let $a_{i,s}, b_{i,t}$, $s\in \MB{u_{i}}$ and $t\in\MB{v_{i}}$ be distinct elements of $\textup{GF}(q)$\textcolor{edit_ah}{, where $q\geq \max\nolimits_{i\in\MB{p}}\LB{u_i+v_i}$. }

Matrix $\bold{G}$ in (\ref{eqn: GenMatDL}) is assembled as follows. Consider the Cauchy matrix $\bold{T}_{i}$ on $\textup{GF}(q)^{u_{i}\times v_{i}}$ such that $\bold{T}_{i}=\bold{Y}(a_{i,1},\dots,a_{i,u_{i}}; b_{i,1},\dots,b_{i,v_{i}})$, for $i\in\MB{p}$. Then, we obtain $\bold{A}_{i,i}$, $\bold{B}_{i,j}$, $\bold{E}_{i;l}$, $\bold{U}_{i}$, $\bold{V}_{i;l}$, for $i\in\MB{p}$, $j\in\MB{p}\setminus\LB{i}$, according to the following partition of $\bold{T}_{i}$,

\setcounter{equation}{4}
\begin{equation}\label{eqn: CRSHL}
\bold{T}_{i}=\left[
\begin{array}{c|c}
\bold{A}_{i,i} & \begin{array}{c|c|c|c}
\bold{B}_{i} & \bold{E}_{i;2} & \dots & \bold{E}_{i;L_i}
\end{array}
\\
\hline
\begin{array}{c}\bold{U}_{i}\\
\hline 
\bold{V}_{i;2}\\
\hline
\vdots\\
\hline
\bold{V}_{i;L_i}\end{array}& \bold{Z}_{i}
\end{array}\right],
\end{equation} 
\begin{equation}
\textit{where } \text{ } \bold{B}_{i}=\left[\begin{array}{c|c|c}\bold{B}_{i,j_1} & \dots & \bold{B}_{i,j_{|\Se{M}_i|}}
\end{array}
\right],
\end{equation}
\begin{equation}
\textit{and  } \text{ } \bold{E}_{i;l}=\left[\begin{array}{c|c|c}\bold{E}_{i;l;t_1} & \dots & \bold{E}_{i;l;t_{|T_{i;l}|}}
\end{array}
\right], 
\end{equation}
%\begin{equation}
%\textit{and  } \text{ } \bold{E}_{i,k;l}=\left[\begin{array}{c|c|c}\bold{E}_{i,k;1;l} & \dots & \bold{E}_{i,k;|\Se{V}_k^l|;l}
%\end{array}
%\right], \forall k\in X_i^l,
%\end{equation}
such that $\Se{M}_i=\{j_{1},j_{2},\dots,j_{|\Se{M}_i|\}}$, $T_{i;l}=\{t_{1},t_{2},\dots,t_{|T_{i;l}|}\}$, $\bold{A}_{i,i}\in \textup{GF}(q)^{k_{i}\times r_{i}}$, $\bold{U}_{i}\in \textup{GF}(q)^{\delta_{i}\times r_i}$, $\bold{V}_{i;l}\in \textup{GF}(q)^{\eta_{i;l}\times r_i}$, $\bold{B}_{i,j}\in \textup{GF}(q)^{k_{i}\times \delta_{j}}$ for all $v_{j}\in\Se{M}^1_i$, and $\bold{E}_{i;l;t}\in \textup{GF}(q)^{k_{i}\times \gamma_{i;t}}$. Let $\bold{B}_{i,j}=\left[ \bold{E}_{i;l;t},\bold{0}_{k_i\times(\eta_{j;l}-\gamma_{i;t})}\right]$, and $\bold{A}_{i,j}=\bold{B}_{i,j}\bold{V}_{j;l}$, for all $j\in Y_{t;i}$, $t\in T_{i;l}$; let $\bold{A}_{i,j}=\bold{B}_{i,j}\bold{U}_j$ for $v_j\in \Se{M}_i$; otherwise $\bold{A}_{i,j}=\bold{0}_{k_i\times r_i}$. Substitute components of $\bold{G}$ in (\ref{eqn: GenMatDL}) correspondingly. %Specify $\bold{A}_{i,j}$ in (\ref{eqn: GenMatDL}) for $i \neq j$ as follows:
%\begin{equation}\label{eqn: cons1Axy}
%\bold{A}_{i,j}=\begin{cases}
%\bold{B}_{i,j}\bold{U}_j,&\text{for all }v_j\in \Se{M}_i^1,\\
%\bold{B}_{i,j}\bold{V}_{j;l},&\text{for all }t\in R_{i;l}, \textcolor{red}{(i,j)\in \Se{P}_t},\\
%\bold{0},&\text{otherwise}.
%\end{cases}
%\end{equation}
Let $\Se{C}_2$ represent the code with generator matrix $\bold{G}$.

\end{cons}

\begin{figure}
\centering
\includegraphics[width=0.22\textwidth]{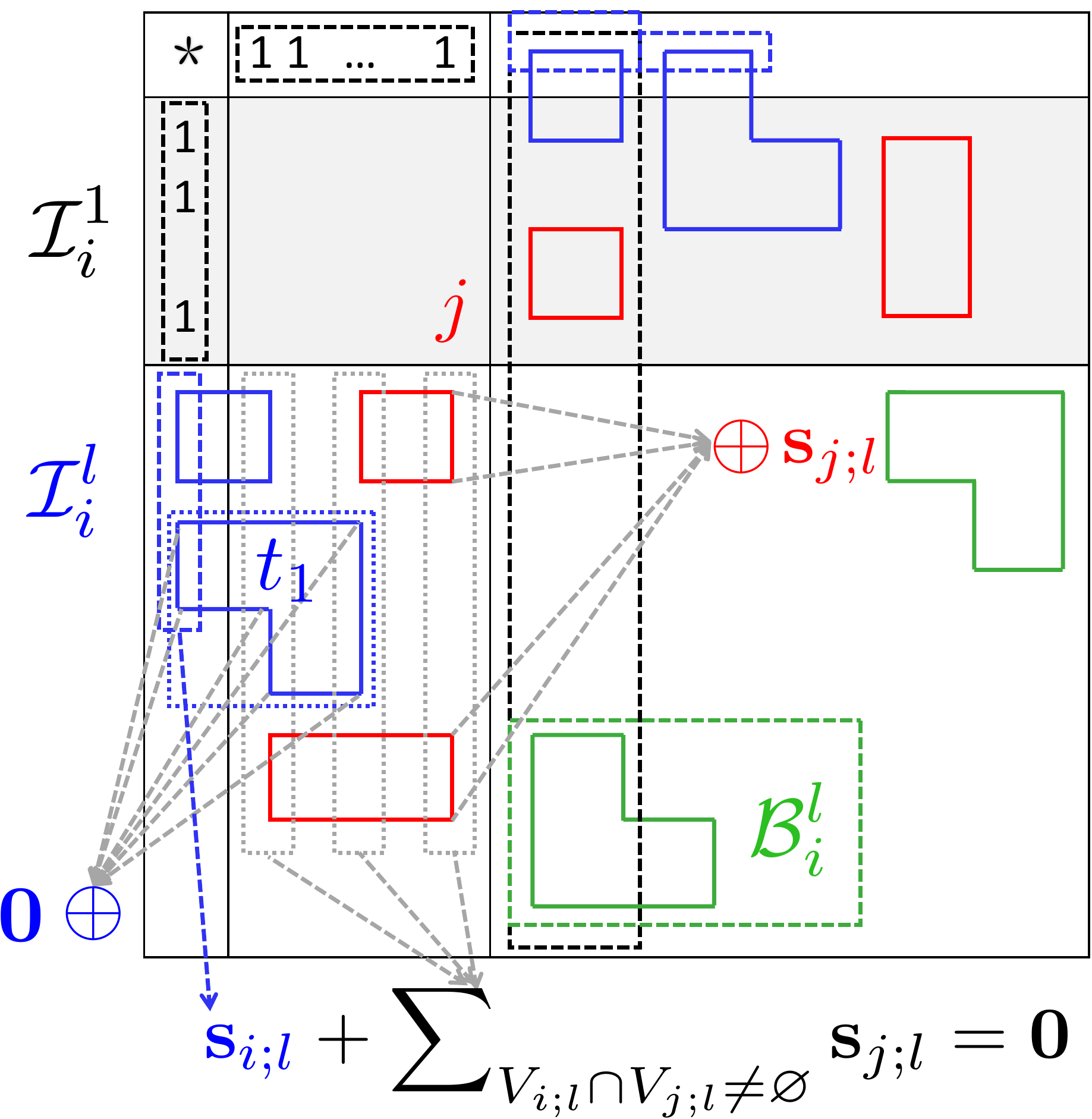}
\hspace{1pt}
\includegraphics[width=0.25\textwidth]{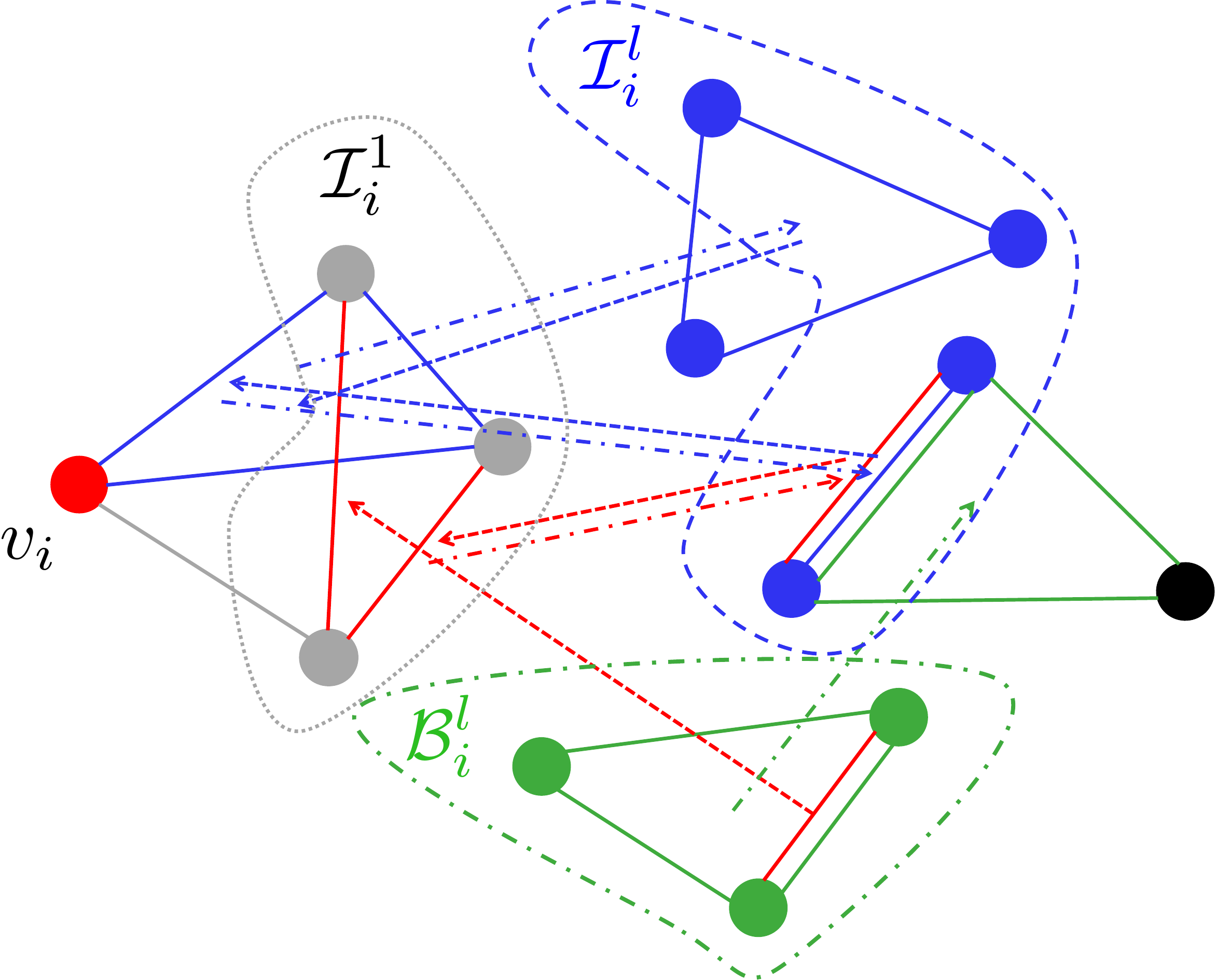}
\caption{Proof of \Cref{theo: ECCcons2}.}
\label{fig: proof}
\end{figure}

\begin{theo} \label{theo: ECCcons2} The code $\Se{C}_2$ has ECC hierarchies $\bold{d}_i=(d_{i,0},d_{i,1},\dots,d_{i,L_i})$, for all $v_i\in V$, where $d_{i,0}=r_i-\delta_i-\sum\nolimits_{l=2}^{L_i}\eta_{i;l}$, $d_{i,1}=r_i+\sum\nolimits_{j\in \Se{M}_i}\delta_j$, and $d_{i,l}=r_{i}+\sum\nolimits_{j\in \Se{M}_i}\delta_{j}+\sum\nolimits_{2\leq l'\leq l,t\in T_{i;l'}} \gamma_{i;t}$. Moreover, $\Se{I}_i^1=\Se{M}_i$, $\Se{B}_i^1=\bigcup\nolimits_{v_j\in\Se{M}_i}\textcolor{edit_ah}{\left(\Se{M}_j\setminus(\{v_i\}\cup \Se{M}_i)\right)}$; for $2\leq l\leq L_i$, $\Se{I}^l_i=\bigcup\nolimits_{t\in R_{i;l}} X_{t;i}$, $\Se{B}_i^l=\bigcup\nolimits_{v_j\in\Se{I}^l_i} \textcolor{edit_ah}{\left(\Se{I}^l_j\setminus (\{v_i\}\cup\Se{A}_i^l)\right)}$ (recall $\Se{A}_i^l=\bigcup\nolimits_{l'\leq l}\Se{I}^{l'}_i$), $\lambda_{i,l;\Se{W}}=r_i+\sum\nolimits_{v_j\in \Se{M}_i,(\Se{M}_j\setminus\{v_i\})\subseteq (\Se{M}_i\cup\Se{W})}\delta_j+\sum\nolimits_{\substack{2<l'\leq l,t\in T_{i;l}, C_t=l'', Y_{t;i}=\{j,j'\}, \\ \textcolor{edit_ah}{\Se{I}_{j}^{l''}\setminus\Se{A}_{i}^{l'}\subseteq (\{v_i\}\cup\Se{W})\allowbreak\text{ or }\Se{I}_{j'}^{l''}\setminus\Se{A}_{i}^{l'}\subseteq (\{v_i\}\cup\Se{W})}}}\gamma_{i;t}$, for $\varnothing\subseteq \Se{W}\subseteq \Se{B}_i^l$.

\end{theo}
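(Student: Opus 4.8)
The plan is to verify each claimed quantity by combining the super-regularity of the Cauchy matrix $\bold{T}_i$ (every square submatrix of $\bold{T}_i$ is invertible) with the cycle structure enforced by \Cref{defi: consistent graph}, reusing the mechanism of \Cref{theo: ECCcons1} at the first level. Write $\bold{p}_i$ for the $r_i$ parity symbols of $\bold{c}_i$, so that $\bold{p}_i=\bold{m}_i\bold{A}_{i,i}+\bold{s}_i$, where $\bold{s}_i$ collects all cross parities routed into $v_i$. From the partition of $\bold{T}_i$ in \Cref{cons: 2} one has $\bold{s}_i=\bold{q}_i^{(1)}\bold{U}_i+\sum_{l=2}^{L_i}\bold{q}_i^{(l)}\bold{V}_{i;l}$, where $\bold{q}_i^{(l)}$ is the level-$l$ aggregate $\sum_{v_j}\bold{m}_j\bold{B}_{\bullet}$ of the helpers acting on $v_i$. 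I would first record two consequences of super-regularity: (i) the stacked reserved rows $[\bold{U}_i;\bold{V}_{i;2};\dots;\bold{V}_{i;L_i}]$ have full row rank, so once $\bold{m}_i$ is known $v_i$ recovers every block $\bold{q}_i^{(l)}$ separately from $\bold{p}_i-\bold{m}_i\bold{A}_{i,i}$; and (ii) $[\bold{A}_{i,i}\mid\bold{B}_i\mid\bold{E}_{i;2}\mid\dots\mid\bold{E}_{i;L_i}]$ prepended with $\bold{I}_{k_i}$ generates an MDS code, so every fresh clean-parity block handed to $v_i$ raises its erasure tolerance by exactly its width.

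For $d_{i,0}$, locally $\bold{s}_i$ is unknown and ranges over the row space of the reserved rows, of dimension $\delta_i+\sum_{l=2}^{L_i}\eta_{i;l}$; annihilating that space leaves $r_i-\delta_i-\sum_{l=2}^{L_i}\eta_{i;l}$ clean MDS checks, which by (ii) and \cite[Construction 1]{Yang2019HC} gives $d_{i,0}$ exactly. For the first level I would show $\bold{s}_i$ is cancellable from $\{\bold{m}_j\}_{v_j\in\Se{M}_i}$ alone: $\bold{q}_i^{(1)}$ is assembled directly from the recovered neighbours, and for $l\ge2$ the construction makes all columns of a cycle $t$ share the same row-matrices, whence $\bold{q}_i^{(l)}=\bold{q}_j^{(l)}$ for every $j\in Y_t\setminus\{v_i\}$; Condition~2 of \Cref{defi: consistent graph} places such a partner $j$ inside $\Se{M}_i$, and after its local recovery it outputs $\bold{q}_j^{(l)}$ by (i). Cancelling $\bold{s}_i$ restores the full $r_i$ clean parities; then, exactly as in \Cref{theo: ECCcons1}, each $v_j\in\Se{M}_i$ whose partners $\Se{M}_j\setminus\{v_i\}$ are recovered releases the $\delta_j$ symbols $\bold{m}_i\bold{B}_{i,j}$. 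This yields $\Se{I}_i^1=\Se{M}_i$, the stated $\Se{B}_i^1$, the formula for $\lambda_{i,1;\Se{W}}$, and its maximum $d_{i,1}=r_i+\sum_{j\in\Se{M}_i}\delta_j$.

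For $l\ge2$ I would induct on $l$. The level-$l$ helpers are the rows feeding column $i$ in the level-$l$ cycles, $\Se{I}_i^l=\bigcup_{t\in R_{i;l}}X_{t;i}$; once recovered, each $v_j\in\Se{I}_i^l$ exposes through (i) the aggregate into its own parity, which by cycle-sharing equals the aggregate into the partner column of the same cycle and therefore contains the block $\bold{m}_i\bold{E}_{i;l;t}$ sought by $v_i$. Isolating that block requires subtracting the remaining summands $\bold{m}_{j'}\bold{B}_{\bullet}$; those indexed by $\Se{A}_i^l$ (and the first-level helpers) are already available, while the outstanding indices are exactly $\bigcup_{v_j\in\Se{I}_i^l}(\Se{I}_j^l\setminus(\{v_i\}\cup\Se{A}_i^l))=\Se{B}_i^l$. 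Hence each cycle $t\in T_{i;l}$ whose backup demand is met by $\Se{W}$ contributes its $\gamma_{i;t}$ clean parities, producing the stated $\lambda_{i,l;\Se{W}}$; setting $\Se{W}=\Se{B}_i^l$ fires every cycle and, by (ii), gives $d_{i,l}=r_i+\sum_{j\in\Se{M}_i}\delta_j+\sum_{2\le l'\le l,\,t\in T_{i;l'}}\gamma_{i;t}$. Condition~1 (disjointness of the $Y_t$) ensures these increments never overlap, so the counts are exact rather than lower bounds.

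The main obstacle is the first-level cancellation of the higher-level pollution $\sum_{l\ge2}\bold{q}_i^{(l)}\bold{V}_{i;l}$: it is what lets all levels share one clean copy of $\bold{m}_i\bold{A}_{i,i}$, and it succeeds only because the cycle construction makes the aggregates column-invariant and because Condition~2 supplies a first-level-recoverable partner in every cycle. I would therefore concentrate on these two structural facts, and on checking—via super-regularity of $\bold{T}_i$—that $\bold{A}_{i,i},\bold{B}_i,\bold{E}_{i;2},\dots$ stay jointly MDS so that no increment is lost to an accidental linear dependence; once the blocks $\bold{q}_i^{(l)}$ are cleanly separated, the enumeration of $\Se{B}_i^l$ and $\lambda_{i,l;\Se{W}}$ is routine bookkeeping.
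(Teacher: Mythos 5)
Your overall skeleton (defer level $0$ and level $1$ to \Cref{theo: ECCcons1} and \cite{Yang2019HC}, use the full row rank of the stacked reserved rows $[\bold{U}_i;\bold{V}_{i;2};\dots;\bold{V}_{i;L_i}]$, then count per-cycle increments $\gamma_{i;t}$ gated by conditions on $\Se{W}$) matches the paper's proof. But the step you yourself identify as the crux---cancelling the higher-level pollution $\sum\nolimits_{l\geq 2}\bold{q}_i^{(l)}\bold{V}_{i;l}$ using only first-level recoveries---rests on a false structural claim: that all columns of a cycle $t$ receive the same aggregate, so that $\bold{q}_i^{(l)}=\bold{q}_j^{(l)}$ for a single partner $j\in Y_t\setminus\{v_i\}$. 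This column-invariance holds only when the cycle is a rectangle ($|X_t|=|Y_t|=2$). The construction explicitly allows cycles that are pairs of triangles ($|X_t|=|Y_t|=3$, the hexagons in Fig.~\ref{fig: CoopMatrix}), and there it fails: in \Cref{exam: exam3}, the cycle with $X_t=\{10,11,12\}$, $Y_t=\{4,5,6\}$ feeds column $4$ the sum $\bold{m}_{10}\bold{B}_r+\bold{m}_{11}\bold{B}_t$, column $5$ the sum $\bold{m}_{11}\bold{B}_t+\bold{m}_{12}\bold{B}_v$, and column $6$ the sum $\bold{m}_{10}\bold{B}_r+\bold{m}_{12}\bold{B}_v$; no two of these are equal, so no single partner can hand $v_i$ its share. (A second, smaller conflation: $\bold{q}_j^{(l)}$ is the \emph{full} level-$l$ aggregate at column $j$, which also contains contributions from other cycles in $R_{j;l}$ besides $t$, so even for rectangles the equality of full aggregates is not automatic.)

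The paper's proof replaces your invariance claim with a double-counting argument that you need to adopt. By construction the cross-parity blocks at the two endpoints of each horizontal edge of a cycle are identical (the same $\bold{E}_{i;l;t}$ is used for both $j\in Y_{t;i}$), and the field is $\textup{GF}(2^\theta)$, so in characteristic $2$ the column contributions of \emph{every} cycle sum to zero. Summing the recoverable aggregates $\bold{s}_{j;l}$ over all columns touched by the relevant cycles therefore gives zero, whence $\bold{s}_{i;l}$ equals the sum of the aggregates of the remaining columns; Condition 2 of \Cref{defi: consistent graph} is then precisely what places all of those columns inside $\Se{M}_i$, so each of them can compute its aggregate (your fact (i)) after first-level recovery. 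Note that characteristic $2$ is essential for the triangle cycles and never appears in your write-up; with it, your rectangle case is recovered as the special case where ``sums to zero'' means ``equal.'' The remainder of your argument for $l\geq 2$ (the helpers $\Se{I}_i^l$, the sets $\Se{B}_i^l$, and the $\gamma_{i;t}$ increments) is consistent with the paper, provided you also state explicitly the disjunction that either of the two columns in $Y_{t;i}=\{j,j'\}$ may deliver $\bold{m}_i\bold{E}_{i;l;t}$, which is exactly where the ``or'' in the stated $\lambda_{i,l;\Se{W}}$ comes from.
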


\begin{proof} The ECC capability for the local decoding and the $1$-st level cooperation has been proved in \Cref{theo: ECCcons1}. We now focus only on the $l$-th level cooperation, where $2\leq l\leq L_i$. %\color{red} We breifly introduce the main idea of proving that for $v_i\in V$, the cross parities due to cooperation with nodes in $\Se{I}_i^l$, $2\leq l\leq L$, can be computed if $\{\bold{m}_j\}_{j\in \Se{I}_i^1}$ are recovered. The ECC hierarchies of level $1$ has been discussed in \Cref{theo: ECCcons1}, and we have shown that for the $2$-nd level in \Cref{exam: exam2} to develop the intuition. The complete proof is in \cite{Yang2020DSN}. 

%Fig.~\ref{fig: proof} shows a subgraph of the cooperation graph of $\Se{C}_2$ containing $v_i$, $\Se{I}_i^1$, $\Se{I}_i^l$ only (right), and its corresponding cycle representations in the cooperation matrix (left). Condition 1 in \Cref{defi: consistent graph} implies that the cycles are all disjoint; Condition 2 implies that all the cycles (marked in red) containing an edge in column $j\in V_{i;l}$ are contained within the columns representing $\Se{I}_i^1$. Using the fact that the sum of the cross parities on the vertices of each cycle is $\bold{0}$, we can prove that the sum (denoted by $\bold{s}_{i;l}$) of the $l$-level cross parities of node $v_i$ can be computed by $\bold{s}_{i;l}=\sum\nolimits_{V_{i;l}\cap U_{j;l}\neq \varnothing}\bold{s}_{j;l}$.  \end{proof}
We show that for $v_i\in V$, the cross parities due to cooperation with nodes in $\Se{I}_i^l$, $2\leq l\leq L_i$, can be computed if $\{\bold{m}_j\}_{j\in \Se{I}_i^1}$ are recovered. Fig.~\ref{fig: proof} indicates a subgraph of the cooperation graph of $\Se{C}_2$ containing $v_i$, $\Se{I}_i^1$, $\Se{I}_i^l$ only (right), and its corresponding cycle representations in the cooperation matrix (left). Condition 1 in \Cref{defi: consistent graph} implies that the cycles are all disjoint; Condition 2 implies that all the cycles (marked in red) containing an edge in column $j\in V_{i;l}$ are contained within the columns representing $\Se{I}_i^1$. 

We calculate the sum of all the $l$-th level cross parities of all nodes in $\{i\}\cup \{j: V_{j;l}\cap V_{i;l}\neq \varnothing\}$ by two methods. On one hand, we first calculate the sum of all the $l$-th level cross parities at node $v_{j}$ and denote it by $\bold{s}_{j;l}$, $j\in V_{i;l}\cup\{i\}$: $\bold{s}_{j;l}$ is recoverable because $\{\bold{V}_{i;l}\}_{2\leq l\leq L}$ have linearly independent rows. Then we know that the sum is $\bold{s}_{i;l}+\sum\nolimits_{V_{i;l}\cap V_{j;l}\neq \varnothing}\bold{s}_{j;l}$. On the other hand, we first calculate the sum of the cross parities on the vertices of each cycle, which is $\bold{0}$ since the cross parities at the endpoints of each horizontal edge of any cycle are identical, and thus sum up to zero. Therefore, $\bold{s}_{i;l}+\sum\nolimits_{V_{i;l}\cap V_{j;l}\neq \varnothing}\bold{s}_{j;l}$ should also be zero, which indicates that $\bold{s}_i=\sum\nolimits_{V_{i;l}\cap V_{j;l}\neq \varnothing}\bold{s}_j$.

Moreover, for $2\leq l\leq L_i$, any cycle $C_t$ with index $t\in T_{i;l}$ has potential to provide additional $\gamma_{i;t}$ in $l$-th level cross parities to $v_i$. Therefore, $d_{i,l}=r_{i}+\sum\nolimits_{j\in \Se{M}_i}\delta_{j}+\sum\nolimits_{2\leq l'\leq l,t\in T_{i;l'}} \gamma_{i;t}$. This is done if any one of the two nodes in $Y_{t;i}=\{j,j'\}$ obtains its $\gamma_{i;t}$ cross parities at $v_i$ through its $l_t$-th level cooperation. However, the condition of $v_j$ successfully calculating these $\gamma_{i;t}$ cross parities is that all other cross parities generated from nodes lying in the $l_t$-th level cooperation are calculated and subtracted from the $l_t$-th level parities of $v_j$, i.e., $\Se{I}_j^{l_t}\setminus \Se{A}_i^{l} \subseteq \{v_i\}\cup\Se{W}$. Similarly, the condition of $v_{j'}$ successfully calculating these $\gamma_{i;t}$ cross parities at $v_i$ is described as $\Se{I}_{j'}^{l_t}\setminus \Se{A}_i^{l} \subseteq \{v_i\}\cup\Se{W}$. Therefore, the overall requirement is stated as ``$\Se{I}_j^{l_t}\setminus \Se{A}_i^{l} \subseteq \{v_i\}\cup\Se{W}$ or $\Se{I}_{j'}^{l_t}\setminus \Se{A}_i^{l} \subseteq \{v_i\}\cup\Se{W}$''. From the aforementioned discussion, we reach that $\Se{B}_i^l=\bigcup\nolimits_{t\in T_{i;l},j\in Y_{t;i}}(\Se{I}^{l_t}_j\setminus (\{v_i\}\cup\Se{A}_i^l))$ (marked in green in Fig.~\ref{fig: proof} in both left and right panels) and $\lambda_{i,l;\Se{W}}=r_i+\sum\nolimits_{v_j\in \Se{M}_i,(\Se{M}_j\setminus\{v_i\})\subseteq (\Se{M}_i\cup\Se{W})}\delta_j+\sum\nolimits_{\substack{2<l'\leq l,t\in T_{i;l'}, Y_{t;i}=\{j,j'\}, \\ \textcolor{edit_ah}{\Se{I}_{j}^{l_t}\setminus\Se{A}_{i}^{l'}\subseteq (\{v_i\}\cup\Se{W})\allowbreak\text{ or }\Se{I}_{j'}^{l_t}\setminus\Se{A}_{i}^{l'}\subseteq (\{v_i\}\cup\Se{W})}}}\gamma_{i;t}$, for $\varnothing\subseteq \Se{W}\subseteq \Se{B}_i^l$.
\end{proof}

%\newpage
\balance
\section{Conclusion}
\label{section: conclusion}
Hierarchical locally accessible codes in the context of centralized networks have been discussed in various prior works, whereas those of DSNs (no prespecified topology) have not been explored. In this paper, we presented a topology-aware cooperative data protection scheme for DSNs, which is an extension of, and subsumes, our previous work on hierarchical coding for centralized distributed storage. Our scheme achieves faster recovery speed compared with existing network coding methods, and corrects more erasure patterns compared with our previous work. Our work exploits the power of the blockchain technology in DSNs, and it has potential to be applied in WSNs.
%achieves important properties in cloud storage systems, which are scalability, heterogeneity, and flexibility. 
%Our work exploits the power of the blockchain technology in DSNs, and it has potential to be applied in WSNs. 

%\textcolor{red}{Note: We plan to further shorten the proof of Theorem 2 into a short paragraph, and to add a paragraph before the conclusion to discuss about scalability, heterogeneity, and flexibility, as well as the physical meaning of cooperations. We will put more detailed complete proof and 1) discussion about recoverable erasure patterns; 2) generalized version of compatible graphs and algorithm to find good cooperation graph for large-scale networks with given topology; 3) possible application as algebraic block codes for unequal erasure protection; in a long journal extension (to merge with the GLC paper). And our initial planned journal extension on algorithm for erasure correction will be a separate short journal.}

\section*{Acknowledgment}
This work was supported in part by NSF under the grants CCF-BSF 1718389 and CCF 1717602, and in part by AFOSR under the grant FA 9550-17-1-0291.

\bibliography{ref}
\bibliographystyle{IEEEtran}

\end{document}